\documentclass[pdflatex,sn-mathphys-num]{sn-jnl}

\usepackage{graphicx}%
\usepackage{multirow}%
\usepackage{amsmath,amssymb,amsfonts}%
\usepackage{amsthm}%
\usepackage{mathrsfs}%
\usepackage[title]{appendix}%
\usepackage{xcolor}%
\usepackage{textcomp}%
\usepackage{manyfoot}%
\usepackage{booktabs}%
\usepackage{algorithm}%
\usepackage{algorithmicx}%
\usepackage{algpseudocode}%
\usepackage{listings}%
\usepackage{graphicx, mathtools, bm, bbm, ulem, braket, tikz, url, hyperref, cleveref, enumitem} 
\usetikzlibrary{calc}
\usepackage{subcaption}
\usepackage{caption, float}
\usetikzlibrary{positioning}

\theoremstyle{thmstyleone}%
\newtheorem{theorem}{Theorem}[section]
\newtheorem{proposition}[theorem]{Proposition}%

\theoremstyle{thmstyletwo}%
\newtheorem{example}{Example}%
\newtheorem{remark}{Remark}%

\theoremstyle{thmstylethree}%
\newtheorem{definition}{Definition}%

\newcommand{\Tr}{\mathrm{Tr}}
\newcommand{\Ad}{\operatorname{Ad}}
\numberwithin{equation}{section}

\begin{document}

\title[Pauli Supported Invariants for Periodic Graphs-Derived Hamiltonians]{Pauli Supported Invariants for Periodic Graphs-Derived Hamiltonians}

\author*[1,2]{\fnm{Sarah} \sur{Chehade}}\email{sschehad@central.uh.edu}

\author[3]{\fnm{Andrew} \sur{Vlasic}}\email{avlasic@deloitte.com}

\author[4]{\fnm{Rebekah} \sur{Herrman}}\email{rherrma2@utk.edu}

\author[1,2]{\fnm{Rick} \sur{Mukherjee}}\email{rick-mukherjee@utc.edu}

\affil[1]{\orgdiv{Quantum Center}, \orgname{University of Tennessee, Chattanooga}, \city{Chattanooga}, \postcode{37403}, \state{TN}, \country{USA}}

\affil[2]{\orgdiv{Department of Physics and Astronomy, University of Tennessee}, \city{Chattanooga}, \postcode{37403}, \state{TN}, \country{USA}}

\affil[3]{\orgname{Deloitte Consulting LLC}, \city{Tampa}, \postcode{33602}, \state{FL}, \country{USA}}

\affil[4]{\orgdiv{Industrial and Systems Engineering}, \orgname{University of Tennessee, Knoxville}, \postcode{37996}, \state{TN}, \country{USA}}


\abstract{ We introduce a graph invariant obtained from Pauli decompositions of Hamiltonians derived from local graph neighborhoods. Given a rooted h-hop neighborhood, we construct a local adjacency operator, embed it into a common Hilbert space dimension, and define its Pauli-support set as the collection of Pauli strings appearing with nonzero coefficients. For periodic graphs, we prove invariance under lattice-compatible graph equivalence and establish converse results under root-separation hypothesis. The framework naturally extends to Lie closures, commutants, and Cartan-type structures generated by the associated Pauli supports. We further discuss extensions to aperiodic graphs with finite local complexity and make connections to cut-and-project models of quasicrystals and notably, Penrose tilings. From a quantum information perspective, the resulting invariants provide a operator-theoretic description of graph-based Hamiltonians and offer a new mechanism for comparing graph-based quantum systems through Pauli-support data. }

\keywords{graph invariants, quantum information theory, periodic graphs, crystalline graphs, Hamiltonian representations, quasicrystals}

\maketitle

\section{Introduction}
    Understanding how local structure determines global behavior is a recurring theme across mathematics, physics, materials science, and quantum information~\cite{lovasz2012large, sachdev1999quantum, cirac2012goals, altman2021quantum}. In graph theory, this question appears most prominently in the study of graph isomorphism, which determines whether two graphs are structurally identical. While graph isomorphism is efficiently solvable for many important classes, such as bounded degree~\cite{luks1982isomorphism}, it remains a challenging problem in general and identifying computable invariants that reliably distinguish graphs is notoriously difficult~\cite{babai2016graph}. 

    This difficulty motivates the search for more structured settings in which local to global constructions become more tractable. Periodic graphs, also known as crystals or space graphs, provide such a setting. These graphs arise naturally in lattice systems, condensed matter models, and structured quantum circuits~\cite{sachdev1999quantum, sunada2012topological, childs2004spatial, gonzales2025efficient, herrman2022simplifying, atallah2026simulating}, and are characterized by a finite repeating unit together with a lattice action that generates the full infinite structure. The presence of this additional symmetry suggests that global structures may be recoverable from sufficiently rich local data. Recent work indicates that quantum computing may be useful for identifying if graphs with particular structure are isomorphic or not by leveraging the local structure \cite{gaitan2014graph, mills2019quantum, szegedy2019qaoa, douglas2008classical}. 

    A fundamental question is therefore whether periodic structures can be compared and classified using local invariants. While graph isomorphism provides a notion of equivalence for finite graphs, periodic graphs require a refinement that incorporates both combinatorial structure and lattice translations. In this setting, two periodic graphs should be considered equivalent only if there exists a graph isomorphism that preserves both the graph structure and the periodic translation structure induced by the lattice.  
    In this work, we introduce a new invariant for periodic graphs based on Pauli support. Given a graph, we associate to each local neighborhood a Hamiltonian encoding of its adjacency structure and expand this operator in the Pauli basis. The resulting support, which we call the root set, captures algebraic information about the local structure. Aggregating this data across the graph yields a computable invariant that is sensitive to both local connectivity and global periodic structures. 

    Our perspective is motivated by the central role of Pauli expansions and operator algebras in quantum information~\cite{vlasic2024quop}. From a quantum information perspective, our proposed invariant using Pauli-supports provides a natural language for describing graph-derived Hamiltonians \cite{hantzko2024tensorized, van2020circuit}. Such Hamiltonians naturally arise in quantum simulations~\cite{cody2012faster}, quantum walks~\cite{gonzales2025efficient, herrman2022simplifying, atallah2026simulating}, controllability analysis~\cite{gessner2014nonlinear}, and circuit synthesis~\cite{furrutter2024quantum}. Consequently, Pauli-support invariants provide a mechanism for comparing graph-based quantum systems through operator-theoretic data rather than purely combinatorial representations. 

    We show that under suitable conditions, root sets provide a graph invariant for local structures in periodic graphs. To validate this approach, we present a series of examples. 
    Beyond periodic systems, we explore the applicability of this framework to aperiodic structures, also known as quasicrystalline structures, such as Penrose tilings. Although these systems lack strict periodicity and fall outside the scope of our main theorems, our numerical experiments suggest that root set invariants remain sensitive to their local structure and may provide a promising tool for distinguishing aperiodic graphs. 


    The central theme of this work is that periodic symmetry reduces an infinite graph to finitely many local combinatorial types. We encode these local structures through Pauli-support data, yielding an operator-theoretic invariant that is preserved under lattice-compatible graph equivalence. Under a natural root-separation condition, this invariant also recovers the underlying local graph structure. Finally, we show that finite local complexity plays an analogous finiteness role in the aperiodic setting, allowing the framework to extend beyond periodic graphs. 
    
    The rest of the paper is structured as follows: In \Cref{section:prelim}, we review graph-theoretic and lattice-theoretic concepts required throughout the paper. In \Cref{section: periodic nbhds}, we establish the relationship between local neighborhoods and periodic graph structures, showing how sufficiently large neighborhoods encode complete motif information. In \Cref{section: local invariant}, we introduce the local adjacency operators, their Pauli decompositions, and the associated root-set invariants, and we establish their invariance under lattice-compatible graph equivalence. In \Cref{section:converse}, we develop partial converse results for periodic graphs through the notion of root separation and show how global graph structures can be recovered from local invariant data. In \Cref{section: Lie}, we evaluate the operator-algebraic structure generated by our invariant by looking at Lie closures, commutants, and Cartan decompositions. Finally in \Cref{section:aperiodic}, we extend this framework to aperiodic graphs with finite local complexity and discuss connections with cut-and-project quasicrystalline structures.

\section{Preliminaries}\label{section:prelim}
 Since the proposed framework lies at the intersection of graph theory, quantum information, and Lie theory, with applications to quantum computing and materials science, readers from different backgrounds may find certain topics more familiar than others. For completeness, we briefly review the graph-theoretic concepts required for the main results. Our discussion of Lie algebras is saved for \Cref{section: Lie} and focuses only on the structural properties used throughout the paper and is not intended as a comprehensive introduction to the subject.    

\subsection{Graph Theory}
A graph $G = (V(G),E(G))$ is a collection of vertices $v \in V(G)$ and edges $(u,v) \in E(G)$ for $u, v \in V(G)$. In this work, we consider only simple graphs, which have no multi-edges or self-loops. A vertex $u$ is said to be a \textit{neighbor} of vertex $v$ if $(u,v) \in E(G)$. The set of all neighbors of $v$, $\{ u : (u,v) \in E(G)\}$, is called the \textit{neighborhood} of $v$ and is denoted $N_G(v)$. The \textit{degree} of vertex $v$ is the cardinality of the neighborhood of $v$, $|N_G(v)|$. The \textit{degree sequence} of $G$ is an ordered list of the degree of each vertex of $G$ in non-increasing order. Note that when $G$ is clear, we shall drop the parenthetical and subscript $G$ in the above definitions. 

Two graphs $G$ and $H$ are said to be \textit{isomorphic} if there exists a bijection $f: V(G) \rightarrow V(H)$ such that $(u,v) \in E(G)$ if and only if $(f(u), f(v)) \in E(H)$.  Note that two graphs must have the same number of vertices and edges to be isomorphic, and the degree sequence of both $G$ and $H$ must be the same. However this is not a sufficient condition to prove isomorphism. 

\subsection{Crystal Graphs}
Crystal graphs are objects where the periodicity of the structure can be infinitely repeated. For instance, periodic graphs can be used as a tiling of a Euclidean plane. The following three definitions, while mathematical, were derived as a means to fully describe this structure. Given this structure, a proposition about the useful automorphism is then given.  

The following definition is that of a \textit{lattice}, which gives a general framework for periodic graphs.
\begin{definition}\label{def:lattice}
    Let $d\in\mathbb{N}$. A lattice in $\mathbb{R}^d$ is a discrete subgroup 
    \begin{equation}
        \Lambda \coloneqq \left\{ \sum\limits_{j=1}^d n_ja_j : n_j\in\mathbb{Z} \right\},
    \end{equation}
    where $a_1, \dots, a_d\in\mathbb{R}^d$ are linearly independent.
\end{definition}

This lattice structure, while essential, is too general for a crystal. Particularly, that of invariance under translations, describing the long-range periodic order. The next definition makes this criterion rigorous. 
\begin{definition}
    A periodic set in $\mathbb{R}^d$ is a discrete set of points $X\subset\mathbb{R}^d$ that is invariant under translation by a lattice $\Lambda$, i.e., 
    \begin{equation}
        X + \lambda = X \quad\text{for all}\quad \lambda\in\Lambda.
    \end{equation}
\end{definition}

\begin{definition}
    Let $\Lambda$ be a lattice. A graph $G=(V,E)$ embedded in $\mathbb{R}^d$ is called a $\Lambda$-space graph if 
    \begin{enumerate}
        \item $V\subset\mathbb{R}^d$ is discrete,
        \item $V + \Lambda = V$ (vertex set is invariant under translation), 
        \item $G$ is invariant under lattice translations. That is
        \begin{equation}
            (v,w)\in E \iff (v+\lambda, w+\lambda)\in E, \quad\text{for all}\quad \lambda\in\Lambda,
        \end{equation}
        \item The quotient graph $G/\Lambda$ has finitely many vertices. 
    \end{enumerate} 
\end{definition}

Given a periodic space graph, we introduce a computationally tractable notion of automorphism. Recall an automorphism is a map from an object to itself, where the map is a bijection. The following proposition provides the corresponding characterization.
\begin{proposition}\label{prop: graph auto}
    Let $G=(V,E)$ be a $\Lambda$-space graph. Then for every $\lambda\in\Lambda$, the translation map
    \begin{equation}
        \tau_\lambda : V\to V, \qquad \tau_\lambda(v) \coloneqq v+\lambda
    \end{equation}
    is a graph automorphism. 
\end{proposition}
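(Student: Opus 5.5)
The plan is to verify directly the two defining properties of a graph automorphism for $\tau_\lambda$: it is a bijection of $V$ onto itself, and it preserves both adjacency and non-adjacency. Everything needed is already in conditions 2 and 3 of the definition of a $\Lambda$-space graph, together with the group structure of $\Lambda$ from \Cref{def:lattice}.

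\emph{Well-definedness.} First I will check that $\tau_\lambda$ actually maps $V$ into $V$. Condition 2 gives $V+\Lambda=V$, so $v+\lambda\in V$ for every $v\in V$.

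\emph{Bijectivity.} Since $\Lambda$ is a subgroup of $(\mathbb{R}^d,+)$, we have $-\lambda\in\Lambda$, so $\tau_{-\lambda}$ is also a map $V\to V$. From the identities $\tau_{-\lambda}\circ\tau_\lambda=\mathrm{id}_V$ and $\tau_\lambda\circ\tau_{-\lambda}=\mathrm{id}_V$, it follows that $\tau_\lambda$ is a bijection with inverse $\tau_{-\lambda}$.

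\emph{Edge preservation.} Here I will apply condition 3 directly: $(v,w)\in E$ if and only if $(v+\lambda,w+\lambda)\in E$, which is exactly $(\tau_\lambda(v),\tau_\lambda(w))\in E$. Because condition 3 is stated as a biconditional, both adjacency and non-adjacency are preserved. This matches the definition of isomorphism given in the preliminaries, with $G=H$ and $f=\tau_\lambda$.

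There is no genuine obstacle in this argument. The one point that needs care is the bijectivity step, which relies on the closure of $\Lambda$ under negation. That is why the group structure of the lattice, and not merely $V+\Lambda\subseteq V$ for a semigroup, is invoked there. As an optional remark, the map $\lambda\mapsto\tau_\lambda$ is a group homomorphism $\Lambda\to\mathrm{Aut}(G)$, since $\tau_\lambda\circ\tau_\mu=\tau_{\lambda+\mu}$.
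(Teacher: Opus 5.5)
Your proposal is correct and matches the paper's proof step by step: $\tau_\lambda$ maps $V$ into $V$ because $V+\Lambda=V$, it is a bijection with inverse $\tau_{-\lambda}$, and edges are preserved by the translation-invariance biconditional. The paper also relies on $-\lambda\in\Lambda$, but only implicitly; you state it explicitly.
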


\begin{proof}
        Fix $\lambda\in\Lambda$. Since $G$ is a space graph with respect to the lattice $\Lambda$, it follows that the vertex set is invariant under lattice translations. Hence
        \begin{equation}
            v\in V \implies v+\lambda\in V + \Lambda = V.
        \end{equation}
        So $\tau_\lambda: V\to V$ holds. Moreover, the inverse map is precisely $\tau_{-\lambda}$ because 
        \begin{equation}
            \tau_{-\lambda}\tau_\lambda(v) = \tau_{-\lambda}(v+\lambda) = v,
        \end{equation}
        and similarly for $\tau_\lambda\tau_{-\lambda}$. Hence $\tau_\lambda$ is a bijection. Next, by translation invariance of edges in a space graph, we have that 
        \begin{equation}
            (u,v)\in E \iff (\tau_\lambda(u),\tau_\lambda(v))\in E.
        \end{equation}
        This completes the proof. 
\end{proof}

The graph automorphism in the proposition establishes a basis to conduct analysis on a periodic graph by restricting the entire graph to a finite number of computationally tangible sets, particularly, a `generator' set that extends to the entire graph from shifts.

\begin{definition}
    Let $(G,\Lambda)$ be a periodic graph. For a vertex $v\in V(G)$, the orbit of $v$ under $\Lambda$ is
    \begin{equation}
        \mathcal{O}(v)
        \coloneqq
        \{ v + \lambda : \lambda \in \Lambda \}.
    \end{equation}
\end{definition}

The lattice $\Lambda$ partitions $V(G)$ into disjoint orbits under translation. Since $G/\Lambda$ has finitely many vertices, there are finitely many such orbits. A unit cell $V_Y$ contains exactly one representative from each orbit.

\begin{definition}
    A fundamental domain for a lattice $\Lambda$ is a subset $Y\subset\mathbb{R}^d$ where $$Y = \left\{ \sum_{i=1}^d t_i a_i : \forall i \ 0 \leq t_i < 1 \right\}$$ 
    such that 
    \begin{enumerate}
        \item $\mathbb{R}^d = \bigcup\limits_{\lambda\in\Lambda}(Y + \lambda)$,
        \item $(Y + \lambda_1)\cap (Y + \lambda_2) = \emptyset,$ for all $\lambda_1\neq\lambda_2$. 
    \end{enumerate}
    Intuitively, the fundamental domain is the one tile that generates all of the space under translation. 
\end{definition}

The fundamental domain provides the geometric building block of the periodic structure. For graph-theoretic and computational purposes, however, it is more convenient to work with the finite graph contained within a fundamental domain rather than the geometric region itself. This motivates the following notions of a \emph{motif} and a \emph{unit cell}. Note that the unit cell may have a different name in various disciplines. For instance, in physics, it has the name \emph{primitive cell}.
The motif is the finite graph corresponding to a unit cell, and the entire infinite space graph is recovered by translating the motif by all elements of the lattice. 

\begin{definition}
      \begin{enumerate}
        \item A unit cell of a space graph $G$ is a fundamental domain $Y$ restricted to the vertex set $V_Y\coloneqq V\cap Y$.
        \item A graph restricted to the unit cell is an induced finite graph we call the motif. This graph notationally is given by $G[Y]\coloneqq G[V_Y]$.
    \end{enumerate}  
\end{definition}    

To make these concepts more tangible, we walk through two different examples. Example \ref{ex:fundamental_domain} starts with a square lattice and argues why the proposed subset of $\mathbb{R}^2$ is the fundamental domain. Example \ref{ex:honeycomb} walks through how to computationally create a honeycomb lattice.

\begin{example}\label{ex:fundamental_domain}
    Let $V = \mathbb{Z}^2\subset\mathbb{R}^2$ be a square lattice defined using the relations
    \begin{align*}
        (x,y) \sim & (x\pm 1,y) \\
        (x,y) \sim & (x,y\pm 1).
    \end{align*}
    Then the lattice $\Lambda = \mathbb{Z}(1,0) \oplus \mathbb{Z}(0,1) = 
    \mathbb{Z}^2$ and the fundamental domain is $Y = [0,1)\times [0,1)$. Thus the unit cell is then $V_Y = \mathbb{Z}^2\cap Y$. Note that in this unit cell there is only one vertex, namely, $(0,0)$.
    
     This claim is not entirely obvious. One may observe that $\Lambda$ is a natural lattice as the shifts are one unit. First, we show $Y = [0,1)\times [0,1)$ is a fundamental domain. Let $(x,y)\in\mathbb{R}^2$. Define $m = \lfloor x\rfloor \text{ and } n = \lfloor y \rfloor$, where $\lfloor\cdot\rfloor$ denotes the floor function. Then we have that $x-m \text{ and } y-n \in [0,1)$, and hence $(x-m,y-n)\in Y.$ Since $(m,n)\in\Lambda$, it follows that $(x,y) = (x-m,y-n) + (m,n).$ Thus every point of $\mathbb{R}^2$ lies in some translate $Y+\lambda$ with $\lambda\in\Lambda$, proving $\mathbb{R}^2=\bigcup_{\lambda\in\Lambda}(Y+\lambda).$ 
     
     Next, we show that if $\lambda_1\neq \lambda_2$, then $(Y+\lambda_1)\cap (Y+\lambda_2)=\emptyset.$ For the sake of contradiction, suppose there exists a point $p\in (Y+\lambda_1)\cap (Y+\lambda_2).$ Then there exist $y_1,y_2\in Y$ such that $p=y_1+\lambda_1=y_2+\lambda_2.$ Rearranging gives $y_1-y_2=\lambda_2-\lambda_1\in\Lambda=\mathbb{Z}^2.$ Write $y_1=(a_1,b_1) \text{ and } y_2=(a_2,b_2),$ where $a_1-a_2\in(-1,1)$ and $ b_1-b_2\in(-1,1).$
        Since $y_1-y_2\in\mathbb{Z}^2$, both coordinates must be integers. But the only integer in $(-1,1)$ is $0$. Hence $a_1-a_2=0 \text{ and } b_1-b_2=0,$ so then $y_1=y_2$. From this, it follows that $\lambda_1=\lambda_2$, which contradicts the assumption. 
        
        This shows $Y=[0,1)\times[0,1)$ is a fundamental domain for $\Lambda$.
\end{example}

\begin{figure}
    \centering 
    \begin{tikzpicture}[scale=3, line cap=round, line join=round]    
        \draw[->, gray, thin] (-0.5,0) -- (2.5,0) node[right] {$x$};
        \draw[->, gray, thin] (0,-0.5) -- (0,2.0) node[above] {$y$};
        
        \def\s{0.866025403784}   
        \def\h{0.288675134595}   
        
        \coordinate (a1) at (1,0);                 
        \coordinate (a2) at (0.5,\s);              
        \coordinate (pA) at (0,0);                 
        \coordinate (pB) at (0.5,\h);              
        \foreach \m in {-1,0,1,2} {
          \foreach \n in {-1,0,1,2} {
        
            \coordinate (A)  at ($ (pA) + \m*(a1) + \n*(a2) $);
            \coordinate (B0) at ($ (pB) + \m*(a1) + \n*(a2) $);
            \coordinate (B1) at ($ (pB) + {(\m-1)}*(a1) + \n*(a2) $);
            \coordinate (B2) at ($ (pB) + \m*(a1) + {(\n-1)}*(a2) $);
        
            \draw[black!60, line width=0.35pt] (A) -- (B0);
            \draw[black!60, line width=0.35pt] (A) -- (B1);
            \draw[black!60, line width=0.35pt] (A) -- (B2);
        
            \fill[black] (A)  circle (0.38pt);
            \fill[black] (B0) circle (0.38pt);
          }
        }
        
        \coordinate (O)   at (0,0);
        \coordinate (U1)  at ($ (O) + (a1) $);
        \coordinate (U2)  at ($ (O) + (a2) $);
        \coordinate (U12) at ($ (O) + (a1) + (a2) $);
        
        \draw[red, line width=0.9pt] (O) -- (U1) -- (U12) -- (U2) -- cycle;
        \node[red] at ($ (O)!0.5!(U12) + (0,0.07) $) {$U$};
        
        \fill[blue] (pA) circle (0.85pt);
        \fill[blue] (pB) circle (0.85pt);
        \node[blue, anchor=north east] at (pA) {$p_A=(0,0)$};
        \node[blue, anchor=south]      at (pB) {$p_B=\left(\tfrac12,\tfrac{1}{2\sqrt3}\right)$};
        
        \draw[->, thick] (O) -- (U1)
            node[pos=1, below right] {$a_1=(1,0)$};
        \draw[->, thick] (O) -- (U2)
            node[pos=1, above left] {$a_2=\left(\tfrac12,\tfrac{\sqrt3}{2}\right)$};
        
        \node[black, anchor=west, scale=0.75] at ($ (pA) + (0.02,0.02) $) {$A_{0,0}$};
        \node[black, anchor=west, scale=0.75] at ($ (pB) + (0.02,0.02) $) {$B_{0,0}$};
        
        \coordinate (A10) at ($ (pA) + 1*(a1) $);
        \coordinate (B10) at ($ (pB) + 1*(a1) $);
        \node[black, anchor=west, scale=0.75] at ($ (A10) + (0.02,0.02) $) {$A_{1,0}$};
        \node[black, anchor=west, scale=0.75] at ($ (B10) + (0.02,0.02) $) {$B_{1,0}$};
        
        \coordinate (A01) at ($ (pA) + 1*(a2) $);
        \coordinate (B01) at ($ (pB) + 1*(a2) $);
        \node[black, anchor=west, scale=0.75] at ($ (A01) + (0.02,0.02) $) {$A_{0,1}$};
        \node[black, anchor=west, scale=0.75] at ($ (B01) + (0.02,0.02) $) {$B_{0,1}$};    
    \end{tikzpicture}
    \caption{An illustration of a honeycomb structure described in Example \ref{ex:honeycomb}.}
    \label{fig:honeycomb-aligned}
\end{figure}

\begin{example}\label{ex:honeycomb}
 The honeycomb graph in Figure \ref{fig:honeycomb-aligned} has a nice mathematical derivation. Let $a_1 = (1,0)$, $a_2 = \left(\frac{1}{2},\frac{\sqrt{3}}{2}\right),$
    and define the lattice $\Lambda = a_1\mathbb{Z}\oplus a_2\mathbb{Z}\subset \mathbb{Z}^2,$ where $a_i\mathbb{Z}\coloneqq \{a_i\times n : n\in \mathbb{Z} \},$
    for $i=1,2$. Let $p_A = (0,0)$, $p_B = \left(\frac{1}{2}, \frac{1}{2\sqrt{3}} \right),$ and define the vertex set $V = \{A_{m,n}, B_{m,n} : m,n\in\mathbb{Z}\},$ where $A_{m,n} \coloneqq p_A + ma_1 + na_2$ and $B_{m,n}\coloneqq p_B + ma_1 + na_2.$

    Then, the honeycomb lattice is obtained by the following adjacent vertices:
    \begin{align*}
        A_{m,n} \sim &\, B_{m,n},\\
        A_{m,n} \sim &\, B_{m-1,n},\\
        A_{m,n} \sim &\, B_{m,n-1},
    \end{align*}
    where $\sim$ means there is an edge between the two, for all $m,n\in\mathbb{Z}$. 
    We observe the fundamental domain is $Y = \{sa_1 + ta_2 : 0\leq s,t<1 \}$ and thus the unit cell is $V_Y = V \cap Y = \{p_A, p_B\}.$
\end{example}

Finally, we end this subsection with the definition of equivalence, not only to clear up any ambiguity but also to establish a framework for results.  
\begin{definition}\label{def: space equiv}
    Two space graphs $(G_1,\Lambda_1)$ and $(G_2, \Lambda_2)$ are said to be space-graph equivalent and denoted by $(G_1,\Lambda_1)\sim (G_2,\Lambda_2)$ if there exists 
    \begin{enumerate}
        \item a graph isomorphism $\phi: G_1\to G_2$,
        \item and a lattice isomorphism $A:\Lambda_1\to\Lambda_2$, where for the graph isomorphism we have $\phi(v + \lambda) = \phi(v) + A(\lambda)$ for all $v\in V(G_1)$ and all $\lambda\in\Lambda_1$.       
    \end{enumerate}
    This notion of equivalence should preserve the periodic structure. 
\end{definition}

\begin{remark}\label{remark:graph-isomorphism-observations}
    One may show that the lattice isomorphism $A$ is a linear map; the construction comes from working with the linearly independent base elements and building the characteristics from there. Furthermore, given the collection of vertices that lie on the unit cell, say $\{v_1, \ldots, v_d\}$, for an arbitrary $v \in V_1$, there exists $\lambda \in \Lambda_1$ such that $v = v_i +  \lambda$, for some $i=1,\dots, d$. Therefore, all vertices in $V_2$ can be written as $\phi(v_i) + A(\lambda)$ for some $i$ and $\lambda \in \Lambda_1$. 
\end{remark}

\section{Neighborhoods for Periodic Graphs}\label{section: periodic nbhds}
The structure of a crystal graph has a natural periodicity. There is a lot of information within the periodicity and the analysis in the paper leverages this information. However, the structure discussed in the previous section is insufficient. To add further context and tools, this section discusses the mathematical structure of periodic graphs.

A periodic graph has the potential to be quite large, and in order to extract computational information at scale, one has to consider local characteristics. Of course, since the graph is periodic, there is meaningful information that extends to the global graph, particularly, that of the fundamental domain. The following notation and definitions give us the rigorous foundation for the local analysis.  

\begin{definition}
    Let $V_Y$ be a unit cell with respect to fundamental domain $Y$. The diameter is defined as 
    \begin{equation}\label{eq: diameter}
        D_Y\coloneqq \max_{u,v\in V_Y}d_G(u,v).
    \end{equation}
\end{definition}
With this definition, we can define a $h$-hop ball.

\begin{definition}
Let $G$ be a space graph. For $v\in V$, define the $h$-hop ball as
\begin{equation}\label{eq: local ball}
    B_h^G(v) \coloneqq \{u \in V : d_G(u,v)\leq h\}.
\end{equation}
\end{definition}

Before our next result, we need a way to ensure that the unit cell only connects to nearby cells and not arbitrary distant ones (similar to how atoms interact with only nearby atoms). To do this, define the edge-span by
    \begin{equation}\label{eq: edge-span}
        \Delta(Y) \coloneqq\{\lambda\in\Lambda : \exists \,u,w\in V_Y \text{ with } (u,w+\lambda)\in E\}. 
    \end{equation}
    In our work, we only consider crystals where the edge-span cardinality, i.e., $| \Delta(Y)|$, is finite. 
    For example, the square lattice only allows edges to connect by translating left, right, up, and down. Then
    $\Delta(Y) = \{(1,0), (-1,0), (0,1), (0,-1)\}$. See \Cref{tikz:square_delta} for its visual. 

\begin{figure}[htbp]
    \centering

    \begin{subfigure}[t]{0.48\linewidth}
        \centering
        \begin{tikzpicture}[scale=0.85]
            \foreach \x in {-1,0,1}{
                \foreach \y in {-1,0,1}{
                    \fill[blue] (\x,\y) circle (2.5pt);
                }
            }

            \draw[thick,black] (-1,0)--(0,0)--(1,0);
            \draw[thick,black] (0,-1)--(0,0)--(0,1);

            \fill[red] (0,0) circle (3pt);

            \draw[->,thick,green!60!black] (0,0) -- (1,0);
            \draw[->,thick,green!60!black] (0,0) -- (-1,0);
            \draw[->,thick,green!60!black] (0,0) -- (0,1);
            \draw[->,thick,green!60!black] (0,0) -- (0,-1);
        \end{tikzpicture}

        \caption{$\Delta(Y)=
        \{(1,0),(-1,0),(0,1),(0,-1)\}$.}
        \label{tikz:square_delta}
    \end{subfigure}
    \hfill
    \begin{subfigure}[t]{0.48\linewidth}
        \centering
        \begin{tikzpicture}[scale=0.60]
            \foreach \x in {-3,-2,-1,0,1,2,3}{
                \fill[blue] (\x,0) circle (2.5pt);
            }

            \fill[red] (0,0) circle (3pt);

            \foreach \x in {-3,-2,-1,1,2,3}{
                \draw[thick,black] (0,0)--(\x,0);
            }

            \draw[->,thick] (3,0) -- (3.8,0);
            \draw[->,thick] (-3,0) -- (-3.8,0);
        \end{tikzpicture}

        \caption{$\Delta(Y)=
        \{(k,0):k\in\mathbb{Z}\}$.}
        \label{tikz:inf_delta}
    \end{subfigure}

    \caption{Examples of finite and infinite displacement sets
    $\Delta(Y)$.}
\end{figure}

    To see an infinite edge-span example, consider some periodic graph with $\Lambda = \mathbb{Z}$ lattice. Take one vertex in its unit cell $v\in V_Y$, and then connect it to infinitely many translated copies $v\to v+k$, for every $k\in\mathbb{Z}$. Then $\Delta(Y) = \mathbb{Z}$. See \Cref{tikz:inf_delta} for a visual aid.

\begin{theorem}\label{theorem: unit cell translation}
    Let $G$ be a connected space graph with respect to lattice $\Lambda$. Fix a fundamental domain $Y$. Then, there exists $\tilde{h}\in \mathbb{N}$ such that for every $h\geq \tilde{h}$, there exists $\lambda\in \Lambda$ for which the ball $B_h^G(v)$ satisfies the following:
    \begin{equation*}
        \lambda + V_Y\subset B_h^G(v).
    \end{equation*}
\end{theorem}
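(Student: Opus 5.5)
We need to prove that there is an $\tilde h$, uniform over all vertices $v\in V$, such that every ball $B_h^G(v)$ with $h\ge \tilde h$ contains a full translated copy $\lambda+V_Y$ of the unit cell. The natural choice is to take $\lambda$ to be the lattice vector carrying the cell that contains $v$, so that $v\in\lambda+V_Y$. The required radius is then the diameter of a translated cell, which by periodicity is the same as the diameter $D_Y$ of the unit cell. So I plan to set $\tilde h := D_Y$. The only real work is to show that $D_Y$ is finite, and I will use connectedness for that.

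\textbf{Step 1 (locating $v$).} Fix $v\in V$. By the covering property of the fundamental domain, there is a unique $\lambda\in\Lambda$ with $v\in Y+\lambda$. Then $v-\lambda\in Y$, and since $V+\Lambda=V$, also $v-\lambda\in V$. Hence $v-\lambda\in V_Y$, that is, $v\in\lambda+V_Y$.

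\textbf{Step 2 (finiteness of $D_Y$).} The set $V_Y$ contains exactly one representative of each $\Lambda$-orbit, and $G/\Lambda$ has finitely many vertices. So $V_Y$ is finite. Because $G$ is connected, $d_G(u,w)<\infty$ for every pair $u,w\in V_Y$. The quantity $D_Y$ in \eqref{eq: diameter} is a maximum over finitely many finite numbers, so $D_Y\in\mathbb{N}$. I expect this to be the main point to state carefully. The statement does not spell it out, but connectedness is exactly what keeps $D_Y$ from being infinite; without it, distinct orbit representatives could lie in different components. Note that I do not need the finite edge-span assumption here.

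\textbf{Step 3 (translation invariance of distance).} By \Cref{prop: graph auto}, each translation $\tau_\lambda$ is a graph automorphism. Automorphisms map paths to paths bijectively, so they preserve graph distance:
\begin{equation*}
d_G(u+\lambda,w+\lambda)=d_G(u,w)\quad\text{for all } u,w\in V.
\end{equation*}
It follows that any two vertices of $\lambda+V_Y$ are at distance at most $D_Y$ from each other.

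\textbf{Step 4 (conclusion).} Let $h\ge \tilde h=D_Y$ and let $u\in\lambda+V_Y$. Both $u$ and $v$ lie in $\lambda+V_Y$ by Step 1, so Step 3 gives $d_G(u,v)\le D_Y\le h$. Therefore $u\in B_h^G(v)$, and hence $\lambda+V_Y\subset B_h^G(v)$.

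The same $\tilde h=D_Y$ works for every $v$. This uniformity is the point to emphasize in the write-up.
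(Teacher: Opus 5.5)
Your proof is correct, and it is more direct than the paper's. You translate by the same lattice vector as the paper does: the $\mu$ with $v=u+\mu$ and $u\in V_Y$.

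The paper takes a detour. It first uses the finite edge-span assumption to form the enlarged set $S=V_Y\cup\bigcup_{\lambda\in\Delta(Y)}(V_Y+\lambda)$. It then fixes a reference vertex $v_0\in V_Y$ and sets $h_0=\max_{x\in S}d_G(v_0,x)$. Translation invariance plus the triangle inequality through $v_0+\mu$ then give $S+\mu\subseteq B_h^G(v)$ for all $h\ge \tilde h:=h_0+D_Y$.

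You instead bound directly the distance from $v$ to each vertex of its own translated cell. This gives the sharper radius $\tilde h=D_Y$. As you note, it needs only connectedness and the finiteness of $V_Y$, not the finite edge-span hypothesis.

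The paper's detour does buy a stronger conclusion than the one stated. $S$ contains every neighbor of every vertex of $V_Y$, so the ball contains the translated cell together with all cells joined to it by an edge. Hence the induced subgraph on $B_h^G(v)$ contains every edge incident to $\mu+V_Y$. This is the motif information the remark after the theorem says the local invariants need.

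Your direct method recovers that stronger statement too. Replace $D_Y$ by $\max_{u\in V_Y,\,x\in S}d_G(u,x)$, which is finite under finite edge-span and never exceeds $h_0+D_Y$. So the reference vertex and the triangle inequality are not essential in either version. For the statement exactly as written, your argument is complete.
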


\begin{proof}
    For the fundamental domain $Y$, the unit cell connects to finitely many neighboring cells. That is in this work, we assume the set 
    \begin{equation*}
        \Delta(Y) \coloneqq\{\lambda\in\Lambda : \exists \,u,w\in V_Y \text{ with } (u,w+\lambda)\in E\}
    \end{equation*}
    is finite.  
    Define
    \begin{equation}\label{eq: S}
        S\coloneqq V_Y \cup \bigcup\limits_{\lambda\in\Delta(Y)}(V_Y+\lambda)\subset V.
    \end{equation}
    This set is finite because $V_Y$ is finite and so is $\Delta(Y)$. Fix a reference vertex $v_0\in V_Y$, and  set 
    \begin{equation*}
        h_0\coloneqq \max\limits_{x\in S}d_G(v_0,x). 
    \end{equation*}
    This maximum is indeed finite because $S$ is finite and $G$ is connected. Then by definition, we have that $S\subseteq B_{h_0}^G(v_0)$. 
    Let $v\in V$ be an arbitrary vertex. Since $Y$ is a fundamental domain, it follows that there exists $\mu\in\Lambda$ and $u\in V_Y$ such that $v=u+\mu$. By \Cref{prop: graph auto}, translations by $\Lambda$ are graph automorphisms and hence they preserve graph distances. i.e., for any $u,v\in V$ and for any $\mu\in\Lambda$, we have that $d_G(u+\mu,v+\mu) = d_G(u,v). $ Hence $S+\mu\subseteq B_{h_0}^G(v_0+\mu).$ In particular, since $h\geq h_0$ implies that $B_{h_0}^G(\cdot)\subseteq B_h^G(\cdot)$, resulting in $S+\mu\subseteq B_{h}^G(v_0+\mu),$ for all $h\geq h_0$. 
    
    For any vertex $x\in V$ and by the triangle inequality, we see that 
   
    \begin{align*}
        d_G(x,v)\leq & \, d_G(x, v_0+\mu) + d_G(v_0+\mu,v) \\
        = & \, d_G(x, v_0 +\mu) + d_G(v_0,u) \\
        \leq & \, h_0 + D_Y. 
    \end{align*}
    Thus, every $w\in S+\mu$ satisfies $d_G(w,v)\leq h_0+D_Y,$ so that $S+\mu\subseteq B_{h+D_Y}^G(v).$

    Setting $\tilde{h}\coloneqq h_0+D_Y$, we see that for all $v\in V$ and every $h\geq \tilde{h}$, there exists $\mu\in \Lambda$ such that $S+\mu \subseteq B_h^G(v)$, as desired.
\end{proof}

\Cref{theorem: unit cell translation} explains why local neighborhoods eventually contain complete motif information and thus provides the geometric foundation for the local invariants introduced in \Cref{section: local invariant}. From this, we highlight several pairs of graph examples in \Cref{tab:comparison} below. 

\begin{remark}
    In the previous proof, we used the set $S$ defined in \Cref{eq: S} rather than just $V_Y$. Using $V_Y$ is not enough to capture the motif of the graph because the motif contains the vertices $V_Y$ as well as the edges inside the cell and edges from the unit cell to neighboring cells. The assumption that $\Delta(Y)$ is finite is to ensure that we only consider neighboring interactions. 
\end{remark}

\section{Local invariant}\label{section: local invariant}
Using the tools we established for graphs, we move onto the local structure of graphs from the perspective of hops and local adjacency matrices. Interestingly, adjacency matrices have a natural Hermitian structure with zero trace. These are exactly the matrices that generate the Lie algebra of special unitary operators, modulo $\sqrt{-1}$. For asymmetric adjacency matrices, one may apply a Hermitian adjacency matrix transform. From this observation, we will explore what information one may extract when working from this perspective.

Graphs are extremely complex structures that can be exponential in size, which makes it computationally difficult to extract latent information, in general. Consequently, one may focus on analyzing the local structure of nodes. Of course, with periodic graphs, one must analyze the local structure. 

Particularly, the focus is on information retention of local adjacency matrices around nodes. In this section, we formalize this concept. For the remainder of this section and paper, for every local neighborhood $B_h^G(v)$, we assume a fixed canonical vertex ordering that depends only on the chosen vertex $v$ and the graph induced by its neighborhood.   

\begin{definition}
    From~\cite{vlasic2024quop}, and using \Cref{eq: local ball}, let $G$ be a graph and let $v\in V$. Let $G[B_h^G(v)]$ denote the induced subgraph on this ball. Fix an ordering of vertices mentioned above and let 
    \begin{equation}
        A^G_h(v)\in M_{m}(\mathbb{C}),\qquad m\coloneqq |B_h^G(v)|
    \end{equation}
    be the local adjacency matrix of $G[B_h^G(v)]$ with respect to this ordering, where $M_{m}(\mathbb{C})$ denotes the set of square matrices of dimension $m$ with complex values. 
\end{definition}
For each graph $G$, fixed vertex $v$, and hop $h$, we form the skew Hermitian matrix
\begin{equation}\label{eq: hamiltonian}
    H^G_h(v) \coloneqq -\mathrm{i}\,A^G_h(v).
\end{equation}
After embedding $H_h^G(v)$ into a $2^n\times 2^n$ matrix (via padding the matrix with zeros if needed), where $n = \left\lceil \log_2 \left(|B_h^G(v)| \right)\right\rceil$, 
let 
\begin{equation}\label{eq:pauli-ortho-basis}
    \mathcal{P}_n\coloneqq\left\{ \bigotimes_{j=0}^{n-1} \sigma_j
    :
    \sigma_j\in\{I,X,Y,Z\} : j=0,1,\dots, n-1
    \right\}
\end{equation}
denote the n-qubit Pauli basis. For $\bm\sigma = (\sigma_0,\dots \sigma_{n-1})\in \{I,X,Y,Z\}^n$,we write 
\begin{equation}
    P_{\bm\sigma}=\bigotimes_{j=0}^{n-1} \sigma_j\in\mathcal{P}_n. 
\end{equation}
In this basis, any n-qubit operator admits an expansion:
\begin{equation}\label{eq: pauli basis operator}
    H^G_h(v)=\sum\limits_{\bm\sigma\in\{I,X,Y,Z\}^n}c^G_h(v,\bm\sigma)\times P_{\bm\sigma},
\end{equation}
where $c^G_h(v,\bm\sigma)=\frac{1}{2^n}\Tr(P^*_{\bm\sigma}H^G_h(v))$. This follows from the fact that Pauli operators form an orthonormal basis for $\mathrm{End}(\mathcal{H}_n)$, the algebra of all linear operators on the Hilbert space $\mathcal{H}_n$, with respect to the Hilbert-Schmidt inner product $\langle A,B\rangle \coloneqq \frac{1}{2^n}\Tr(A^*B)$. 

For a fixed radius $h\in\mathbb{N}$, we define its \textbf{root set} at hop size $h$ to be 
\begin{equation}\label{eq: roots}
    R^G_h(v)\coloneqq\bigl\{ \bm{\sigma}\in\{I,X,Y,Z\}^n :
    c^G_h(v;\bm{\sigma})\neq 0 \bigr\}.
\end{equation}
The following proposition describes invariance with respect to permutation, an essential characteristic for equivalence when considering algebraic objects. 

\begin{proposition}\label{prop: root translation invariant}
    For all $\lambda\in\Lambda, h\in\mathbb{N}$, and $v\in V$, we have
    \begin{equation}
        R_h^G(v) = R_h^G(v+ \lambda). 
    \end{equation} 
\end{proposition}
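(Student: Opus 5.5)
The plan is to show that translation by $\lambda$ carries the rooted ball at $v$ isomorphically onto the rooted ball at $v+\lambda$. Because the vertex ordering is canonical (it depends only on the root and the induced graph), the two local adjacency matrices are then literally equal. Equal matrices have equal Pauli coefficients, hence equal root sets.

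First, by \Cref{prop: graph auto}, $\tau_\lambda$ is a graph automorphism of $G$, so $d_G(\tau_\lambda(u),\tau_\lambda(w)) = d_G(u,w)$ for all $u,w\in V$. Consequently $\tau_\lambda$ restricts to a bijection
\[
B_h^G(v)\to B_h^G(v+\lambda).
\]
Indeed, $u\in B_h^G(v)$ iff $d_G(u,v)\le h$ iff $d_G(u+\lambda,v+\lambda)\le h$, and surjectivity follows by applying $\tau_{-\lambda}$. Since $\tau_\lambda$ preserves edges in both directions, this restriction is an isomorphism of induced subgraphs
\[
G[B_h^G(v)]\cong G[B_h^G(v+\lambda)]
\]
that sends the root $v$ to the root $v+\lambda$. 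In particular $|B_h^G(v)| = |B_h^G(v+\lambda)|$, so both balls give the same $m$ and the same $n=\lceil\log_2 m\rceil$. The two root sets therefore live in the same index set $\{I,X,Y,Z\}^n$.

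Next I use the standing assumption that the ordering of each ball is canonical, i.e.\ a function only of the rooted induced graph. The main obstacle sits here: one must read this assumption as saying that isomorphic rooted neighborhoods receive orderings related by the isomorphism. Concretely, if $(u_1,\dots,u_m)$ is the ordering of $B_h^G(v)$, then $(u_1+\lambda,\dots,u_m+\lambda)$ is the ordering of $B_h^G(v+\lambda)$. With this reading, entrywise $(A^G_h(v))_{ij} = 1$ iff $(u_i,u_j)\in E$ iff $(u_i+\lambda,u_j+\lambda)\in E$ iff $(A^G_h(v+\lambda))_{ij}=1$. Hence $A^G_h(v)=A^G_h(v+\lambda)$.

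The same conclusion holds under a weaker reading of canonicity, namely that the two orderings differ by some permutation matrix $\Pi$, so that $A^G_h(v+\lambda)=\Pi A^G_h(v)\Pi^{\top}$. But conjugation by a permutation can change the Pauli support, so I would rely on the strong reading. I would also remark that this is exactly what the phrase ``depends only on the chosen vertex $v$ and the graph induced by its neighborhood'' provides.

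Finally, $H^G_h(v) = -\mathrm{i}A^G_h(v) = H^G_h(v+\lambda)$, and the zero-padding to $2^n\times 2^n$ is identical for both. It follows that
\[
c^G_h(v,\bm\sigma)=\tfrac{1}{2^n}\Tr\bigl(P_{\bm\sigma}^*H^G_h(v)\bigr)=c^G_h(v+\lambda,\bm\sigma)
\]
for every $\bm\sigma$. The nonzero supports then coincide, which gives $R_h^G(v)=R_h^G(v+\lambda)$.
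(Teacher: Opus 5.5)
Your proof is correct and follows the paper's argument: $\tau_\lambda$ is a distance-preserving automorphism carrying $B_h^G(v)$ onto $B_h^G(v+\lambda)$, the canonical ordering then makes the (padded) skew-Hermitian adjacency matrices identical, and uniqueness of Pauli coefficients gives equal supports. You are more careful than the paper in two places. You make the isomorphism explicitly rooted, and you isolate the canonical-ordering assumption as the crux. Two small corrections: you only need that isomorphic rooted balls yield identical ordered matrices, since the orderings need not literally be translates of each other. You should also drop the claim that ``the same conclusion holds'' when the orderings are merely permutation-related, because, as you yourself observe, conjugation by a permutation can change the Pauli support.
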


\begin{proof}
    Fix $\lambda\in\Lambda$ and $v\in V$. By \Cref{prop: graph auto}, the map $\tau_\lambda$ is a graph automorphism. In particular, it preserves graph distances and thus
    $$\tau_\lambda(B_h^G(v)) = B_h^G(\tau_\lambda(v)).$$ 
    This implies that $\tau_\lambda$ induces a graph isomorphism and hence 
    $$G[B_h^G(v)] \cong G[B_h^G(v+\lambda)],$$ 
    for all $\lambda\in\Lambda$.
    
    Let $A_h^G(v)$ and $A_h^G(v+\lambda)$ denote the adjacency matrices of $G[B_h^G(v)]$ and $G[B_h^G(v+\lambda)]$ respectively. We assume these matrices are written using a fixed vertex ordering that was also used in the construction of $R_h^G(\cdot)$. Since the graphs $G[B_h^G(v)] $ and $G[B_h^G(v+\lambda)]$ are isomorphic, and by the canonical ordering fixed above, it follows that their skew-Hermitian adjacency matrices are identical. Since Pauli expansions are unique, the corresponding Pauli-support sets are equal. Namely
    \begin{equation*}
        R_h^G(v)=R_h^G(v+\lambda)
    \end{equation*}
    as desired. 
\end{proof}

For the following theorem we write 
$$R_h^G\coloneqq \bigcup\limits_{v\in V}R_h^G(v)$$  
to capture the global collection of all roots. Here, we assume that repeated entries are considered as a single element. What we would like to be true is that $R_h^G$ holds latent information on isomorphically connecting $G$ to other graphs. i.e., two graphs are isomorphic if and only if their global roots set are equal up to some permutation.  However, in general, the statement does not hold and what we are able to establish is the sufficient direction.    

\begin{theorem}
    Let $(G_1,\Lambda_1)\sim(G_2,\Lambda_2)$ be equivalent in the sense of \Cref{def: space equiv}. Then, for every $h>0$, we have
    \begin{equation}
        R_h^{G_1} = R_h^{G_2}.
    \end{equation}
\end{theorem}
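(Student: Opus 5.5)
The plan is to show that the space-graph equivalence $\phi$ carries each local ball of $G_1$ isomorphically onto a local ball of $G_2$. By the paper's convention of a canonical vertex ordering, isomorphic rooted neighborhoods then produce identical skew-Hermitian matrices and hence identical root sets. The lattice compatibility $\phi(v+\lambda)=\phi(v)+A(\lambda)$ is not really needed for this inclusion; the key input is that $\phi$ is a graph isomorphism and so preserves $d_G$.

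First, fix $h>0$ and $v\in V(G_1)$. Since $\phi$ is a graph isomorphism, it preserves shortest paths, so $d_{G_2}(\phi(u),\phi(w))=d_{G_1}(u,w)$ for all vertices. It follows that $\phi(B_h^{G_1}(v))=B_h^{G_2}(\phi(v))$. Restricting $\phi$ to this ball gives an isomorphism of induced subgraphs $G_1[B_h^{G_1}(v)]\cong G_2[B_h^{G_2}(\phi(v))]$ that sends the root $v$ to the root $\phi(v)$. In particular the two balls have the same cardinality $m$, so both are embedded into the same $2^n\times 2^n$ space with $n=\lceil\log_2 m\rceil$.

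Second, exactly as in the proof of \Cref{prop: root translation invariant}, the fixed canonical ordering depends only on the rooted induced neighborhood. Hence $A_h^{G_1}(v)=A_h^{G_2}(\phi(v))$ as matrices, and after zero-padding $H_h^{G_1}(v)=H_h^{G_2}(\phi(v))$. The coefficients $c(\cdot,\bm\sigma)=2^{-n}\Tr(P_{\bm\sigma}^*H)$ then agree for every $\bm\sigma$, because the Pauli expansion \eqref{eq: pauli basis operator} is unique. This gives $R_h^{G_1}(v)=R_h^{G_2}(\phi(v))$.

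Third, taking the union over $v\in V(G_1)$ and using that $\phi$ is a bijection onto $V(G_2)$, every $w\in V(G_2)$ equals $\phi(v)$ for exactly one $v$. Therefore
\begin{equation*}
R_h^{G_1}=\bigcup_{v\in V(G_1)}R_h^{G_2}(\phi(v))=\bigcup_{w\in V(G_2)}R_h^{G_2}(w)=R_h^{G_2}.
\end{equation*}
Optionally, one can combine \Cref{prop: root translation invariant} with Remark~\ref{remark:graph-isomorphism-observations} to reduce both unions to finitely many unit-cell representatives $\phi(v_i)$. This uses $A$ and shows the sets are finite and computable, but it is not needed for the equality.

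The main obstacle is the second step. It depends on the canonical ordering being a genuine isomorphism invariant of the rooted neighborhood; otherwise isomorphic balls could give permutation-conjugate matrices $PAP^{T}$ whose Pauli supports differ. I would state this dependence explicitly and invoke the standing assumption made before the definition of $A_h^G(v)$, as the paper already does in \Cref{prop: root translation invariant}. If that assumption were weakened, I would instead prove the weaker conclusion that the supports agree up to the conjugation action of permutation matrices.
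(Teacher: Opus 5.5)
Your proposal is correct and follows the paper's proof essentially step for step: $\phi$ preserves graph distance, so $\phi(B_h^{G_1}(v))=B_h^{G_2}(\phi(v))$. The canonical-ordering convention then makes the padded matrices identical, giving $R_h^{G_1}(v)=R_h^{G_2}(\phi(v))$ by uniqueness of the Pauli expansion, and the union over the bijection $\phi$ finishes the proof (the paper, like you, never genuinely uses $A$).

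Your remark that the ordering must be an isomorphism invariant of the rooted neighborhood is a useful sharpening of what the paper leaves implicit. The fallback in your last paragraph, however, would not work as stated. Conjugation by a general permutation matrix does not send Pauli strings to Pauli strings, so the supports of $H$ and $PHP^{T}$ are not related by any natural action on $\{I,X,Y,Z\}^n$.
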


\begin{proof}
    Let $\phi:G_1\to G_2$ and $A :\Lambda_1\to\Lambda_2$ be the graph isomorphism and the lattice isomorphism from the space graph equivalence with $\phi(\lambda + v)=A (\lambda) + \phi(v)$. Since $\phi$ is a graph isomorphism, it preserves graph distances. Thus, for every $v\in V$, we have
    \begin{equation*}
        \phi(B_h^{G_1}(v)) = B_h^{G_2}(\phi(v)).
    \end{equation*}
    Moreover, the isomorphism also implies that
    \begin{equation}\label{eq: local graph cong}
        G_1[B_h^{G_1}(v)] \cong G_2[B_h^{G_2}(\phi(v))].
    \end{equation}
    From this and by the canonical ordering convention fixed above, the corresponding ordered local adjacency matrices are identical. Since Pauli expansions are unique, it follows that 
    \begin{equation*}
        R_h^{G_1}(v)=R_h^{G_2}(\phi(v)).
    \end{equation*}
    Taking a union over all $v\in V_1$ and using the fact that $\phi$ is a bijection from $V_1\to V_2$, we have that 
    \begin{align*}
        R_h^{G_1} 
         = 
        \bigcup\limits_{v\in V_1} R_h^{G_1}(v) 
         = \bigcup\limits_{v\in V_1} R_h^{G_2}(\phi(v)) 
         = \bigcup\limits_{w\in V_2}R_h^{G_2}(w)
         = R_h^{G_2}
    \end{align*}
    as desired. 
\end{proof}

\section{Root Separation and Local Converse Results}\label{section:converse}

In this section, we take the local structure around a node that was explored in the previous, described by the set of orthonormal basis elements in Equation \eqref{eq:pauli-ortho-basis} that are obtained from a ball of hop size $h$. 

\begin{definition}
    Let $(G,\Lambda)$ be a connected periodic graph with finite unit cell and finite edge-span.
    We say that $G$ is root-separated at radius $h$ if for all $v,w\in V(G)$,
    \begin{equation}
        B_h^G(v)\not\cong B_h^G(w)
        \implies
        R_h^G(v)\neq R_h^G(w).
    \end{equation}
\end{definition}

\begin{definition}
    Let $(G,\Lambda)$ be a connected periodic graph with finite unit cell and finite edge-span. Define the eventual root separation radius by
    \begin{equation}
        \rho_{\mathrm{ev}}(G) \coloneqq \inf\left\{ h_\ast\in\mathbb{N} : G \text{ is root-separated at every radius } h\ge h_\ast \right\},
    \end{equation}
    with the convention that $\rho_{\mathrm{ev}}(G)=\infty$ if no such $h_\ast$ exists.
\end{definition}
This quantity measures the scale at which the Pauli-support invariant becomes complete for distinguishing local rooted types.
The significance of the next  proposition is that root separation, which is a priori a global condition over the infinite vertex set $V(G)$, can be verified by checking only finitely many pairs of vertices in a single unit cell. Thus, for periodic graphs, root separation reduces to a finite combinatorial verification problem.

\begin{proposition}
    Let $(G,\Lambda)$ be a connected periodic graph with finite unit cell and finite edge-span. Fix $h\in\mathbb{N}$ and let $V_Y$ be a unit cell.

    Then $G$ is root-separated at radius $h$ if and only if
    \begin{equation}\label{eq: root sep unit cell}
        B_h^G(v)\not\cong B_h^G(w) \implies R_h^G(v)\neq R_h^G(w)
    \end{equation}
    for all $v,w\in V_Y$.
\end{proposition}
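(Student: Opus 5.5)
The forward direction is immediate: if $G$ is root-separated at radius $h$, then the implication holds for all $v,w\in V(G)$, and in particular for all $v,w\in V_Y\subseteq V(G)$. All the work is in the reverse direction, where we reduce an arbitrary pair $v,w\in V(G)$ to a pair of unit-cell representatives.

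For the converse, we assume \eqref{eq: root sep unit cell} holds for all $v,w\in V_Y$. Let $v,w\in V(G)$ be arbitrary with $B_h^G(v)\not\cong B_h^G(w)$. Since $V_Y$ contains exactly one representative of each $\Lambda$-orbit, we write $v=v'+\lambda$ and $w=w'+\mu$ with $v',w'\in V_Y$ and $\lambda,\mu\in\Lambda$.

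By \Cref{prop: graph auto}, $\tau_\lambda$ is a graph automorphism, so it preserves graph distances. Hence $\tau_\lambda(B_h^G(v'))=B_h^G(v)$, and $\tau_\lambda$ restricts to an isomorphism of the induced subgraphs that sends the root $v'$ to $v$. The same holds for $\tau_\mu$, $w'$ and $w$. This is exactly the argument in the proof of \Cref{prop: root translation invariant}.

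Therefore $B_h^G(v')\cong B_h^G(v)$ and $B_h^G(w')\cong B_h^G(w)$. Since isomorphism of (rooted) induced balls is an equivalence relation, if $B_h^G(v')\cong B_h^G(w')$ held we could compose to get $B_h^G(v)\cong B_h^G(w)$, a contradiction. So $B_h^G(v')\not\cong B_h^G(w')$. The hypothesis \eqref{eq: root sep unit cell} on the unit cell then gives $R_h^G(v')\neq R_h^G(w')$.

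Finally, \Cref{prop: root translation invariant} gives $R_h^G(v)=R_h^G(v')$ and $R_h^G(w)=R_h^G(w')$, hence $R_h^G(v)\neq R_h^G(w)$. Since $v,w$ were arbitrary, $G$ is root-separated at radius $h$.

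The only delicate point is the step transferring non-isomorphism from $(v,w)$ to $(v',w')$. It needs the isomorphism notion $\cong$ used in the definition of root separation to be an equivalence relation, so that we can argue by contraposition through composition. It also needs translations to preserve that notion, including the root if $\cong$ is read as rooted isomorphism. Both hold because $\tau_\lambda$ is a distance-preserving automorphism sending $v'$ to $v$. I would state explicitly that this argument works for both the rooted and the unrooted reading of $\cong$.
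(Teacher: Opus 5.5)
Your proof is correct and follows essentially the same route as the paper's. Both arguments write arbitrary vertices as translates of unit-cell representatives, use the translation automorphisms to transfer non-isomorphism of the $h$-balls down to the representatives, apply the unit-cell hypothesis, and then transfer $R_h^G(v')\neq R_h^G(w')$ back up via translation invariance of root sets. Your remark that the transfer step needs $\cong$ to be an equivalence relation preserved by translations, in either the rooted or unrooted reading, makes explicit a step the paper asserts without comment.
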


\begin{proof}
    Assume that $G$ is root-separated at radius $h$. By definition, \Cref{eq: root sep unit cell} is true for all for all $x,y\in V(G)$ and hence for $V_Y\subset V(G)$. 

    Conversely, let $x,y\in V(G)$. Since $(G,\Lambda)$ is periodic, there exist $v,w\in V_Y$ and $\lambda,\mu\in\Lambda$ such that $x = v+\lambda$ and $y = w+\mu.$ By translation invariance of the graph, we have $B_h^G(x)\cong B_h^G(v)$ and $B_h^G(y)\cong B_h^G(w).$ Particularly,
    \begin{equation*}
        B_h^G(x)\not\cong B_h^G(y) \implies     B_h^G(v)\not\cong B_h^G(w).
    \end{equation*}
    By the assumption on $V_Y$, it follows that $R_h^G(v)\neq R_h^G(w).$ By \Cref{prop: root translation invariant}, this implies that we have $R_h^G(x)=R_h^G(v)$ and $R_h^G(y)=R_h^G(w)$.

    Therefore, $R_h^G(x)\neq R_h^G(y),$ which proves that $G$ is root-separated at radius $h$ as desired. 
\end{proof}

\begin{theorem}
    Let $(G, \Lambda)$ be a connected periodic graph with finite unit cell and finite edge-span. Fix $h\in \mathbb{N}$ and assume $G$ is root-separated at radius $h$. Then for all $v,w\in V_G$, we have
    \begin{equation}
        R_h^G(v)=R_h^G(w) \implies  B_h^G(v)\cong B_h^G(w).
    \end{equation}
\end{theorem}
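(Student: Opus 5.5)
This is the contrapositive of the defining implication of root separation at radius $h$, so the plan is a short logical argument rather than any computation with Pauli coefficients. Fix $v,w\in V(G)$ and suppose $R_h^G(v)=R_h^G(w)$. We want to conclude $B_h^G(v)\cong B_h^G(w)$, where, as in the definition of root separation, $\cong$ is understood as isomorphism of the induced (rooted) neighborhood graphs $G[B_h^G(v)]$ and $G[B_h^G(w)]$.

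Argue by contradiction. Assume $B_h^G(v)\not\cong B_h^G(w)$. Since $G$ is root-separated at radius $h$, the defining implication applies to the arbitrary pair $v,w\in V(G)$. It does not need to be restricted to the unit cell. It yields $R_h^G(v)\neq R_h^G(w)$, which contradicts the hypothesis. Hence $B_h^G(v)\cong B_h^G(w)$.

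If one prefers to use only the finite check of the preceding proposition, reduce first to the unit cell. Write $v=v'+\lambda$ and $w=w'+\mu$ with $v',w'\in V_Y$. By \Cref{prop: root translation invariant}, $R_h^G(v')=R_h^G(v)=R_h^G(w)=R_h^G(w')$. Running the same contrapositive on $v',w'$ gives $B_h^G(v')\cong B_h^G(w')$. Finally, \Cref{prop: graph auto} shows that $\tau_\lambda$ and $\tau_\mu$ carry $B_h^G(v')$ and $B_h^G(w')$ isomorphically onto $B_h^G(v)$ and $B_h^G(w)$, and composing these isomorphisms gives $B_h^G(v)\cong B_h^G(w)$.

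There is no genuine technical obstacle. The only point needing care is consistency of meaning: the notion of $\cong$ used in the conclusion must be exactly the one used in the definition of root separation, namely rooted isomorphism of the induced subgraphs with the canonical ordering convention. The connectedness, finite unit cell and finite edge-span hypotheses enter only through the preceding proposition, if the unit-cell reduction is used.
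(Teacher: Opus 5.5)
Your proof is correct. Your main argument is just the contrapositive of the definition of root separation. That is legitimate because the definition already quantifies over all pairs $v,w\in V(G)$, so the theorem is essentially a restatement of the definition.

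The paper takes the detour you give as your optional reduction. It writes $v=v_0+\lambda$ and $w=w_0+\mu$ with $v_0,w_0\in V_Y$. It then uses translation invariance and the canonical-ordering convention to get $A_h(v)=A_h(v_0)$ and $A_h(w)=A_h(w_0)$, hence $R_h^G(v_0)=R_h^G(w_0)$. It applies root separation to the representatives and transfers the isomorphism back to $v,w$.

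Under the stated hypothesis this reduction adds nothing. Your direct route also has the advantage of not needing the canonical-ordering convention or the translation invariance of root sets. The reduction only matters if one assumes the weaker unit-cell check, and the preceding proposition already shows that check is equivalent to full root separation.
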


\begin{proof}
    Let $v,w\in V_G$ and assume $R_h^G(v)=R_h^G(w)$. Choose $v_0,w_0\in V_Y$ and $\lambda, \mu\in\Lambda$ such that 
    \begin{equation}
        v=v_0+\lambda, \qquad w=w_0+\mu.
    \end{equation}
    By translation invariance of the graph, we have that 
    \begin{equation}\label{eq: ball cong}
        B_h^G(v)\cong B_h^G(v_0), \qquad B_h^G(w)\cong B_h^G(w_0).
    \end{equation}
    Since the ordering of the vertices is canonical, \Cref{eq: ball cong} implies that the padded adjacency matrices are identical. i.e., 
    \begin{equation}
        A_h(v) = A_h(v_0) \qquad A_h(w)=A_h(w_0).
    \end{equation}
    Hence the corresponding root sets satisfy 
    \begin{equation}
        R_h^G(v_0)=R_h^G(w_0).
    \end{equation}
    By assumption, the graph $G$ is root-separated at radius $h$, so it follows that 
    \begin{equation}
        B_h^G(v_0)\cong B_h^G(w_0).
    \end{equation}
    By \Cref{eq: ball cong}, this implies that 
    \begin{equation}
        B_h^G(v)\cong B_h^G(w)
    \end{equation}
    as desired. 
\end{proof}

\section{Preservation of Lie-algebraic structures}\label{section: Lie}
We move onto taking the set of root elements and consider the Lie-algebraic closure of this set, asking the question about what information is retained with these dynamic Lie algebras. Simulations on quantum computers are a promising technique for understanding quantum dynamics at large scale. Algorithms that utilize Lie algebraic tools, such as Cartan decompositions, offer speed ups and deeper understandings of systems under investigation~\cite{kokcu2022fixed}. Before proving our results on how our root set constructions are preserved at the Lie algebraic level, we provide an introductory overview on the necessary Lie algebraic tools needed in this manuscript. 

\subsection{A Brief Overview of Lie Algebras}
This subsection contains the preliminaries for \Cref{subsubsection: Lie}. 

The Lie algebra of the unitary group is denoted as 
$$\mathfrak{u}(d)\coloneqq \{X\in M_{d}(\mathbb{C}) : X^\star = -X\}.$$ 
i.e., the real vector space of all $d\times d$ skew-Hermitian matrices equipped with the commutator $[X,Y]\coloneqq XY-YX$.  Let $S\subset\mathfrak{u}(d)$. We denote $\mathrm{Lie}(S)$
for the smallest Lie subalgebra of $\mathfrak{u}(d)$ containing $S$. Equivalently, $\mathrm{Lie}(S)$ is obtained from $S$ by repeatedly taking linear combinations of commutators.

In quantum mechanics, the Lie group captures the symmetric group of physical systems, and the Lie algebra is the tangent space of this group, which is centered at the identity element \cite{wigner2012group,varadarajan2013lie,bincer2013lie}. 
This structure gives rise to the following (Theorem 3.20 in Hall \cite{hall2013lie}):

\begin{enumerate}
    \item $AXA^{-1} \in \mathfrak{g}$ for $X \in \mathfrak{g}$ and $A \in \mathbb{G}$, where $\mathbb{G}$ is a Lie matrix group, and $\mathfrak{g}$ is the associated Lie algebra;
    \item $tX \in \mathfrak{g}$ for $t\in \mathbb{R}$;
    \item $X + Y \in \mathfrak{g}$;
    \item and commutator $[X,Y] := XY - YX \in \mathfrak{g}$.
\end{enumerate}

From~\cite{hall2013lie}, recall the next 3 definitions to establish a foundation for the KAK decomposition, an essential characteristic to simplify the analysis and computation.   

\begin{definition}
    If $\mathfrak{g}$ is a semisimple Lie algebra, then a \underline{Cartan subalgebra} of $\mathfrak{g}$ is a subspace $\mathfrak{h}\subset \mathfrak{g}$ with the following properties:
    \begin{enumerate}
        \item $[H_1,H_2]=0$, for all $H_1,H_2\in\mathfrak{h}$
        \item whenever  $X\in\mathfrak{g}$ satisfies $[H,X]=0$, for all $H\in\mathfrak{h}$, then $X\in\mathfrak{h}$
        \item For all $H\in\mathfrak{h}$, the linear map $\operatorname{ad}_H:\mathfrak{g}\to\mathfrak{g}$, where $\operatorname{ad}_H(X)\coloneqq [H,X]$ is diagonalizable.
    \end{enumerate}
\end{definition}

Condition 1 says that $\mathfrak{h}$ is a commutative subalgebra of $\mathfrak{g}$, condition 2 says that $\mathfrak{h}$ is a maximal commutative subalgebra, and condition 3 says that each $\Ad_H$ and $H\in\mathfrak{h}$ are simultaneously diagonalizable. See Proposition 7.11 in~\cite{hall2013lie} for the existence of Cartan subalgebras.

\begin{definition}
     A \underline{Cartan decomposition} of a semisimple Lie algebra $\mathfrak{g}$ is a vector space decomposition
    \begin{equation}\label{eq:Cartan-decomp}
        \mathfrak{g}\;=\;\mathfrak{k}\  \oplus\ \mathfrak{p}
    \end{equation}
   and bracket relations
    \begin{equation*}
        [\mathfrak{k},\mathfrak{k}]\subseteq \mathfrak{k},\qquad
        [\mathfrak{k},\mathfrak{p}]\subseteq \mathfrak{p},\qquad
        [\mathfrak{p},\mathfrak{p}]\subseteq \mathfrak{k}.
    \end{equation*}
    Here, $\mathfrak{k}$ is the Lie subalgebra fixed by an involution $\theta:\mathfrak{g}\to\mathfrak{g}$ so that 
    \begin{equation*}
        \mathfrak{k}= \{X\in \mathfrak{g}: \theta(X)=X \}.
    \end{equation*}
    From this, then $\mathfrak{p}$ is simply the $-1$ eigenspace of $\theta$. That is 
    \begin{equation}
        \mathfrak{p}= \{X\in \mathfrak{g}: \theta(X)= -X \}.
    \end{equation}
\end{definition}

\begin{definition}
    Let $\mathcal{G}$ be a connected Lie group with Lie algebra $\mathfrak{g}$, and let 
    \begin{equation}
         \mathfrak{g}\;=\;\mathfrak{k}\  \oplus\ \mathfrak{p}
    \end{equation}
    be a Cartan decomposition induced by a Cartan involution $\theta$. Let $\mathcal{K}\subset\mathcal{G}$ be the connected Lie group with Lie algebra $\mathfrak{k}$, and let $\mathfrak{a}\subset\mathfrak{p}$ be a Cartan subspace. Define $\mathcal{A}\coloneqq\exp(\mathfrak{a})$. The \underline{$KAK$ decomposition} states that every element $g\in\mathcal{G}$ can be written as
    \begin{equation}
        g = k_1\,a\,k_2,
    \end{equation}
    where $k_1,k_2\in\mathcal{K}$ and $a\in\mathcal{A}$. 
    Equivalently, 
    \begin{equation}
        \mathcal{G}=\mathcal{KAK}.
    \end{equation} 
\end{definition}
The KAK decomposition is the global Lie group analogue of the Cartan decomposition at the Lie algebra.

\subsection{Lie Algebraic Main Results}\label{subsubsection: Lie}
The results in this paper leverage characteristics of the Lie algebra, and subsequent Lie group, of special unitary matrices. 

\begin{proposition}\label{prop: lie closures for roots}
    Let $U\in U(d)$ be unitary and let $S\subset \mathfrak{u}(d)$.
    Then
    \begin{equation}\label{eq:Ad-Lie}
        \mathrm{Lie}(USU^*) \;=\; U\,\mathrm{Lie}(S)\,U^*.
    \end{equation}
    Equivalently, writing $\Ad_U(X)\coloneqq UXU^*$, one has
    $\mathrm{Lie}(\Ad_U(S))=\Ad_U(\mathrm{Lie}(S))$.
\end{proposition}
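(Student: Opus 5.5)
The plan is to show that $\Ad_U$ is a Lie algebra automorphism of $\mathfrak{u}(d)$ and then use the characterization of $\mathrm{Lie}(S)$ as the smallest Lie subalgebra containing $S$. This gives both inclusions in \Cref{eq:Ad-Lie}.

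First, $\Ad_U$ maps $\mathfrak{u}(d)$ to itself. If $X^\star=-X$, then $(UXU^*)^\star = UX^\star U^* = -UXU^*$. The map is real-linear, and it is bijective with inverse $\Ad_{U^*}$, because $U^*U=UU^*=I$.

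Second, $\Ad_U$ preserves brackets:
\begin{equation*}
[UXU^*,UYU^*] = UXU^*UYU^* - UYU^*UXU^* = U[X,Y]U^*.
\end{equation*}

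With these two facts, we get the inclusion $\Ad_U(\mathrm{Lie}(S))\supseteq \mathrm{Lie}(\Ad_U(S))$. The image $\Ad_U(\mathrm{Lie}(S))$ is a real subspace of $\mathfrak{u}(d)$, being the image of a subspace under a linear map. It is closed under brackets by the bracket identity above, so it is a Lie subalgebra. It contains $\Ad_U(S)$. By minimality of $\mathrm{Lie}(\Ad_U(S))$, the inclusion follows.

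For the reverse inclusion, apply the same argument with $U^*$ in place of $U$ and $\Ad_U(S)$ in place of $S$. This gives $\mathrm{Lie}(\Ad_{U^*}\Ad_U(S)) \subseteq \Ad_{U^*}(\mathrm{Lie}(\Ad_U(S)))$, that is, $\mathrm{Lie}(S)\subseteq \Ad_{U^*}(\mathrm{Lie}(\Ad_U(S)))$. Now apply the bijection $\Ad_U$ to both sides. Since $\Ad_U\Ad_{U^*}=\mathrm{id}$, this yields $\Ad_U(\mathrm{Lie}(S))\subseteq \mathrm{Lie}(\Ad_U(S))$. The two inclusions together give equality.

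There is no real obstacle here; the only care needed is to rely on the minimality characterization of $\mathrm{Lie}(\cdot)$ rather than on iterated commutators. If one prefers the iterated-commutator description, an equivalent route is induction on nesting depth. Every element of $\mathrm{Lie}(S)$ is a real linear combination of nested commutators $[X_1,[X_2,\dots,X_k]]$ with $X_i\in S$. By linearity of $\Ad_U$ and the bracket identity, $\Ad_U$ sends each such commutator to the corresponding nested commutator of the $UX_iU^*$, and conversely via $U^*$.
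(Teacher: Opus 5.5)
Your proposal is correct and matches the paper's proof step for step. Both show that $\Ad_U$ is a bracket-preserving bijection with inverse $\Ad_{U^*}$, use minimality of $\mathrm{Lie}(\cdot)$ to get $\mathrm{Lie}(USU^*)\subseteq U\,\mathrm{Lie}(S)\,U^*$, and get the reverse inclusion by running the same argument with $U^*$ on $USU^*$ and conjugating back (your explicit check that $\Ad_U$ preserves skew-Hermitian matrices is a detail the paper leaves implicit).
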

\begin{proof}
    First we show $\Ad_U$ is a Lie algebra automorphism of $\mathfrak{u}(d)$. Linearity is clear. To show $\Ad_U$ preserves the Lie bracket, for $X,Y\in\mathfrak{u}(d)$,
    \begin{align*}
    \Ad_U([X,Y])
    &= U(XY-YX)U^* \\
    &= UXYU^* - UYXU^* \\
    &= (UXU^*)(UYU^*) - (UYU^*)(UXU^*) \qquad (\text{since }U^*U=I)\\
    &= [\Ad_U(X),\Ad_U(Y)].
    \end{align*}

    Next, we show $\Ad_U$ is bijective with inverse $\Ad_{U^*}$.
    For any $X\in\mathfrak{u}(d)$,
    \begin{equation}
    \Ad_U(\Ad_{U^*}(X))=U(U^* X U)U^*=(UU^*)X(UU^*)=X,
    \end{equation}
    and similarly $\Ad_{U^*}(\Ad_U(X))=X$. Hence $\Ad_U$ is bijective and $\Ad_U^{-1}=\Ad_{U^*}$.

    Let $\mathfrak{g}\coloneqq \mathrm{Lie}(S)$.
    By definition, $\mathfrak{g}$ is the smallest Lie subalgebra of $\mathfrak{u}(d)$ that contains $S$.
    Since $\Ad_U$ is a Lie algebra automorphism, the image $\Ad_U(\mathfrak{g})=U\mathfrak{g}U^*$ is again a Lie subalgebra of $\mathfrak{u}(d)$.
    Moreover, because $S\subset \mathfrak{g}$, we have
    \begin{equation}
    USU^*=\Ad_U(S)\subset \Ad_U(\mathfrak{g})=U\mathfrak{g}U^*.
    \end{equation}
    Thus $U\mathfrak{g}U^*$ is a Lie subalgebra containing $USU^*$, so by minimality,
    \begin{equation}
    \mathrm{Lie}(USU^*) \subseteq U\,\mathrm{Lie}(S)\,U^*.
    \end{equation}
    
    For the reverse inclusion, apply the same argument to $U^*$ in place of $U$.
    Starting from the inclusion just proved with $U^*$,
    \begin{equation}
        \mathrm{Lie}(U^*(USU^*)U) \subseteq U^*\,\mathrm{Lie}(USU^*)\,U.
    \end{equation}
    But $U^*(USU^*)U=S$, hence the left-hand side is $\mathrm{Lie}(S)$, so
    \begin{equation}
        \mathrm{Lie}(S)\subseteq U^*\,\mathrm{Lie}(USU^*)\,U.
    \end{equation}
    Conjugating both sides by $U$ yields
    \begin{equation}
        U\,\mathrm{Lie}(S)\,U^* \subseteq \mathrm{Lie}(USU^*).
    \end{equation}
    Together with the first inclusion, this proves \Cref{eq:Ad-Lie}.
\end{proof}

For graphs, replace $S$ by a root set with respect to hop size $h$, then \Cref{prop: lie closures for roots} shows 
    \begin{equation}
        \overline{P}\,\mathrm{Lie}(R_h^{G})\,\overline{P}^*=\mathrm{Lie}(R_h^{\overline{G}}),
    \end{equation}
    where $\overline{P}$ is a permutation matrix, and $\overline{G}$ denotes $\overline{P}G\overline{P}^\star$.

The previous proposition shows that Lie closure is equivariant under unitary conjugation. Applying this to root sets, we can obtain that Cartan decompositions and KAK factorizations associated to the Lie algebra generated by root sets of graphs are also preserved. This is formalized in the following theorem.

\begin{theorem}\label{theorem: cartan equivariance}
    Let $G$ be a graph and let $H_h^G(v)$ be the skew Hermitian matrix generated by the local adjacency matrix for a fixed vertex $v$. Let $R_h^G$ be the $h$-hop root set constructed from the local adjacency matrix. Let $\overline{P}$ be a permutation matrix and denote the permuted graph by $\overline{G}$. 
    Assume the root set construction is permutation equivariant, i.e.,
    \begin{equation}
        R_h^{\overline{G}} = \overline{P} R_h^G \overline{P}^*
    \qquad \text{for all } h .
    \end{equation}
    Let
    \begin{equation}
        \mathfrak{g} := \mathrm{Lie}(R_t^G) \qquad \overline{\mathfrak{g}} := \mathrm{Lie}(R_t^{\overline{G}})
    \end{equation}
    be the Lie algebras generated by the root sets $R_h^G$ and $R_h^{\overline{G}}$ respectively. Then the following holds:
        \begin{enumerate}
    \item
        Suppose $\mathfrak{g} = \mathfrak{k} \oplus \mathfrak{p}$ is a Cartan decomposition with Cartan involution $\theta$.  
        Then
        \begin{equation}
            \overline{\mathfrak{g}} = \overline{P}\mathfrak{k}\overline{P}^* \;\oplus\; \overline{P}\mathfrak{p}\overline{P}^*
        \end{equation}
        is a Cartan decomposition of $\overline{\mathfrak{g}}$ with Cartan involution
        \begin{equation}
        \overline{\theta} = \operatorname{Ad}_{\overline{P}} \circ \theta \circ \operatorname{Ad}_{\overline{P}}^{-1}.
        \end{equation}
    
    \item
        Let $K=\exp(\mathfrak{k})$, and let $\mathfrak{a}\subset\mathfrak{p}$ be a maximal abelian subalgebra with $A=\exp(\mathfrak{a})$, yielding a KAK decomposition
        \begin{equation}
            \exp(\mathfrak{g}) = K A K .
        \end{equation}
        Then
        \begin{equation}
            \exp(\overline{\mathfrak{g}}) = (\overline{P}K\overline{P}^*) (\overline{P}A\overline{P}^*) (\overline{P}K\overline{P}^*)
        \end{equation}
        is a KAK decomposition of $\exp(\overline{\mathfrak{g}})$.
    \end{enumerate}
\end{theorem}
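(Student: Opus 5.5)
The plan is to transport everything along the Lie algebra automorphism $\Ad_{\overline{P}}$. The first step combines the permutation-equivariance hypothesis $R_h^{\overline{G}} = \overline{P} R_h^G \overline{P}^*$ with \Cref{prop: lie closures for roots}, applied to the unitary $U=\overline{P}$ and $S=R_h^G$. This gives
\begin{equation*}
\overline{\mathfrak{g}} = \mathrm{Lie}(\overline{P}R_h^G\overline{P}^*) = \overline{P}\,\mathfrak{g}\,\overline{P}^* = \Ad_{\overline{P}}(\mathfrak{g}).
\end{equation*}
From the proof of that proposition we may reuse the facts that $\Ad_{\overline{P}}$ preserves brackets and is bijective with inverse $\Ad_{\overline{P}^*}$. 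Hence its restriction $\Ad_{\overline{P}}|_{\mathfrak{g}}:\mathfrak{g}\to\overline{\mathfrak{g}}$ is a Lie algebra isomorphism. Semisimplicity of $\mathfrak{g}$, which the definition of a Cartan decomposition presupposes, transfers to $\overline{\mathfrak{g}}$, since ideals correspond to ideals under this isomorphism.

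For part 1, linearity and bijectivity give the direct-sum decomposition $\overline{\mathfrak{g}} = \Ad_{\overline{P}}(\mathfrak{k})\oplus\Ad_{\overline{P}}(\mathfrak{p})$. For the bracket relations, I would push each inclusion through the isomorphism; for example, $[\overline{P}\mathfrak{p}\overline{P}^*,\overline{P}\mathfrak{p}\overline{P}^*]=\overline{P}[\mathfrak{p},\mathfrak{p}]\overline{P}^*\subseteq \overline{P}\mathfrak{k}\overline{P}^*$, and similarly for the other two.

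Next I would check that $\overline{\theta}=\Ad_{\overline{P}}\circ\theta\circ\Ad_{\overline{P}}^{-1}$ is an involutive Lie algebra automorphism of $\overline{\mathfrak{g}}$. It is a composition of automorphisms, and $\overline{\theta}^2=\Ad_{\overline{P}}\theta^2\Ad_{\overline{P}}^{-1}=\mathrm{id}$. It then remains to identify its eigenspaces. For $Y=\overline{P}X\overline{P}^*$ we have $\overline{\theta}(Y)=\pm Y$ if and only if $\theta(X)=\pm X$. This gives exactly $\overline{P}\mathfrak{k}\overline{P}^*$ and $\overline{P}\mathfrak{p}\overline{P}^*$ as the $\pm1$ eigenspaces. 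If "Cartan involution" is meant in the strict sense, one also checks that the form $-B(Y,\overline{\theta}Y)$ is positive definite. This follows because the Killing form is invariant under isomorphisms: $B_{\overline{\mathfrak{g}}}(\Ad_{\overline{P}}X,\Ad_{\overline{P}}Z)=B_{\mathfrak{g}}(X,Z)$, since $\operatorname{ad}$ is intertwined by $\Ad_{\overline{P}}$ and traces are conjugation invariant.

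For part 2, the key identity is $\exp(\overline{P}X\overline{P}^*)=\overline{P}\exp(X)\overline{P}^*$, which follows termwise from the power series because $\overline{P}^*\overline{P}=I$. Applying it to $\mathfrak{k}$, $\mathfrak{a}$ and $\mathfrak{g}$ gives $\exp(\overline{P}\mathfrak{k}\overline{P}^*)=\overline{P}K\overline{P}^*$, $\exp(\overline{P}\mathfrak{a}\overline{P}^*)=\overline{P}A\overline{P}^*$, and $\exp(\overline{\mathfrak{g}})=\overline{P}\exp(\mathfrak{g})\overline{P}^*$. Next, $\overline{P}\mathfrak{a}\overline{P}^*$ is a maximal abelian subspace of $\overline{P}\mathfrak{p}\overline{P}^*$: it is abelian because brackets are transported, and maximal because any strictly larger abelian subspace would pull back under $\Ad_{\overline{P}^*}$ to contradict the maximality of $\mathfrak{a}$. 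Finally, for $g'\in\exp(\overline{\mathfrak{g}})$ write $g'=\overline{P}g\overline{P}^*$, factor $g=k_1ak_2$, and insert $\overline{P}^*\overline{P}=I$ between the factors to get
\begin{equation*}
g'=(\overline{P}k_1\overline{P}^*)(\overline{P}a\overline{P}^*)(\overline{P}k_2\overline{P}^*).
\end{equation*}
The reverse inclusion is immediate by the same identity.

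I expect the main obstacle to be interpretive rather than computational. The statement treats $R_h^G$, a set of Pauli labels, as a subset of $\mathfrak{u}(d)$ that can be conjugated, and it writes $\mathrm{Lie}(R_t^G)$. I would fix the convention that each label $\bm\sigma$ is identified with the skew-Hermitian operator $\mathrm{i}P_{\bm\sigma}$ (or read the equivariance hypothesis as an assumption on these operator sets), and read $t$ as $h$. Under that convention the hypothesis is exactly the input needed, and everything else is transport of structure along $\Ad_{\overline{P}}$.
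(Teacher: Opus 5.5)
Your proposal is correct and follows the same route as the paper. Both arguments transport the involution, its $\pm1$ eigenspaces, the bracket relations, the maximal abelian subspace and the $KAK$ factorization along the Lie algebra isomorphism $\Ad_{\overline{P}}:\mathfrak{g}\to\overline{\mathfrak{g}}$ given by the equivariance hypothesis together with \Cref{prop: lie closures for roots}; you simply make explicit details the paper leaves implicit, namely the identity $\exp(\overline{P}X\overline{P}^*)=\overline{P}\exp(X)\overline{P}^*$, the transfer of semisimplicity and Killing-form positivity, and the conventions $\bm\sigma\mapsto \mathrm{i}P_{\bm\sigma}$ and $t=h$.
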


\begin{proof}
    By \Cref{prop: lie closures for roots}, conjugation by $\overline{P}$ induces a Lie algebra isomorphism. Namely 
    \begin{equation}    
        \operatorname{Ad}_{\overline{P}} : \mathfrak{g} \to \overline{\mathfrak{g}}, \qquad X \mapsto \overline{P} X \overline{P}^* .
    \end{equation}

    Let $\theta$ be the Cartan involution associated with the decomposition
    $\mathfrak{g}=\mathfrak{k}\oplus\mathfrak{p}$. 
    Define 
    \begin{equation}
        \overline{\theta} = \operatorname{Ad}_{\overline{P}} \circ \theta \circ \operatorname{Ad}_{\overline{P}}^{-1}.
    \end{equation}
    Since $\operatorname{Ad}_{\overline{P}}$ is a Lie algebra automorphism and $\theta$ is a Cartan involution for $\mathfrak{g}$, it follows that $\overline{\theta}$ is also a Cartan involution for $\overline{\mathfrak{g}}$. 
    Moreover, the $\pm1$ eigenspaces of $\overline{\theta}$ are
    \begin{equation}
        \overline{\mathfrak{k}} := \overline{P}\mathfrak{k}\overline{P}^*, \qquad \overline{\mathfrak{p}} := \overline{P}\mathfrak{p}\overline{P}^*.
    \end{equation}
    Because $\operatorname{Ad}_{\overline{P}}$ preserves Lie brackets, the relations 
    \begin{equation}       
        [\mathfrak{k},\mathfrak{k}]\subset\mathfrak{k}, \quad [\mathfrak{k},\mathfrak{p}]\subset\mathfrak{p}, \quad [\mathfrak{p},\mathfrak{p}]\subset\mathfrak{k}
    \end{equation}
    are carried to the corresponding relations for $\overline{\mathfrak{k}}$ and $\overline{\mathfrak{p}}$.
    Hence 
    \begin{equation}
        \overline{\mathfrak{g}} = \overline{\mathfrak{k}} \oplus \overline{\mathfrak{p}}
    \end{equation}
    is the Cartan decomposition of  $\overline{\mathfrak{g}}$.

    Similarly for the KAK decomposition, let $\mathfrak{a}\subset\mathfrak{p}$ be a maximal abelian subalgebra and define $\overline{\mathfrak{a}} := \overline{P}\mathfrak{a}\overline{P}^*$. Since $\operatorname{Ad}_{\overline{P}}$ is a Lie algebra isomorphism, $\overline{\mathfrak{a}}$ is abelian and maximal in $\overline{\mathfrak{p}}$.
    Setting
    \begin{equation}
        \overline{K} := \exp(\overline{\mathfrak{k}}), \qquad \overline{A} := \exp(\overline{\mathfrak{a}}),
    \end{equation}
    we have 
    \begin{equation}
        \overline{K} = \overline{P}K\overline{P}^*, \qquad \overline{A} = \overline{P}A\overline{P}^*.
    \end{equation}
    Since conjugation by $\overline{P}$ is a group automorphism, it carries the KAK decomposition, 
    \begin{equation}
        \exp(\mathfrak{g}) = K A K
    \end{equation}
    to 
    \begin{equation}
        \exp(\overline{\mathfrak{g}}) = \overline{K}\,\overline{A}\,\overline{K}.
    \end{equation}
    Hence is a valid KAK decomposition of $\exp(\overline{\mathfrak{g}})$.
\end{proof}

\subsection{Empirical Experiments}\label{subsubsubsection: experiments}
To display the techniques, we calculate the root sets of the graphs given in Figure \ref{fig:iso_pair} displaying isomorphic graphs, Figure \ref{fig:square_triang} and Figure \ref{fig:cl_ml} showing  non-isomorphic graphs, and Figure \ref{fig:aperiodic} for simple examples of aperiodic graphs. The results are given in Table \ref{tab:comparison}.

The experiments were ran with Python where the graphs were generated with the Networkx package \cite{hagberg2020networkx} as well as to calculate the local adjacency matrices around a given node. To calculate the decomposition, supporting functions were written with the Numpy package \cite{harris2020array}. As the decompositions are with respect to a quantum circuit, local adjacency matrices that are not of the form $2^{n_0} \times 2^{n_0}$ were `padded' with zeros and the integer $n_0$ is taken as the largest pad value to ensure all adjacency matrices all have the same dimension. To determine the root decomposition, with respect to the Pauli operators, the root set is recursively calculated and the Frobenius norm is taken to determine if a root is in the latent sum.   

    \begin{figure}
        \centering
        \begin{subfigure}[t]{0.48\textwidth}
            \centering
            \includegraphics[height=4cm]{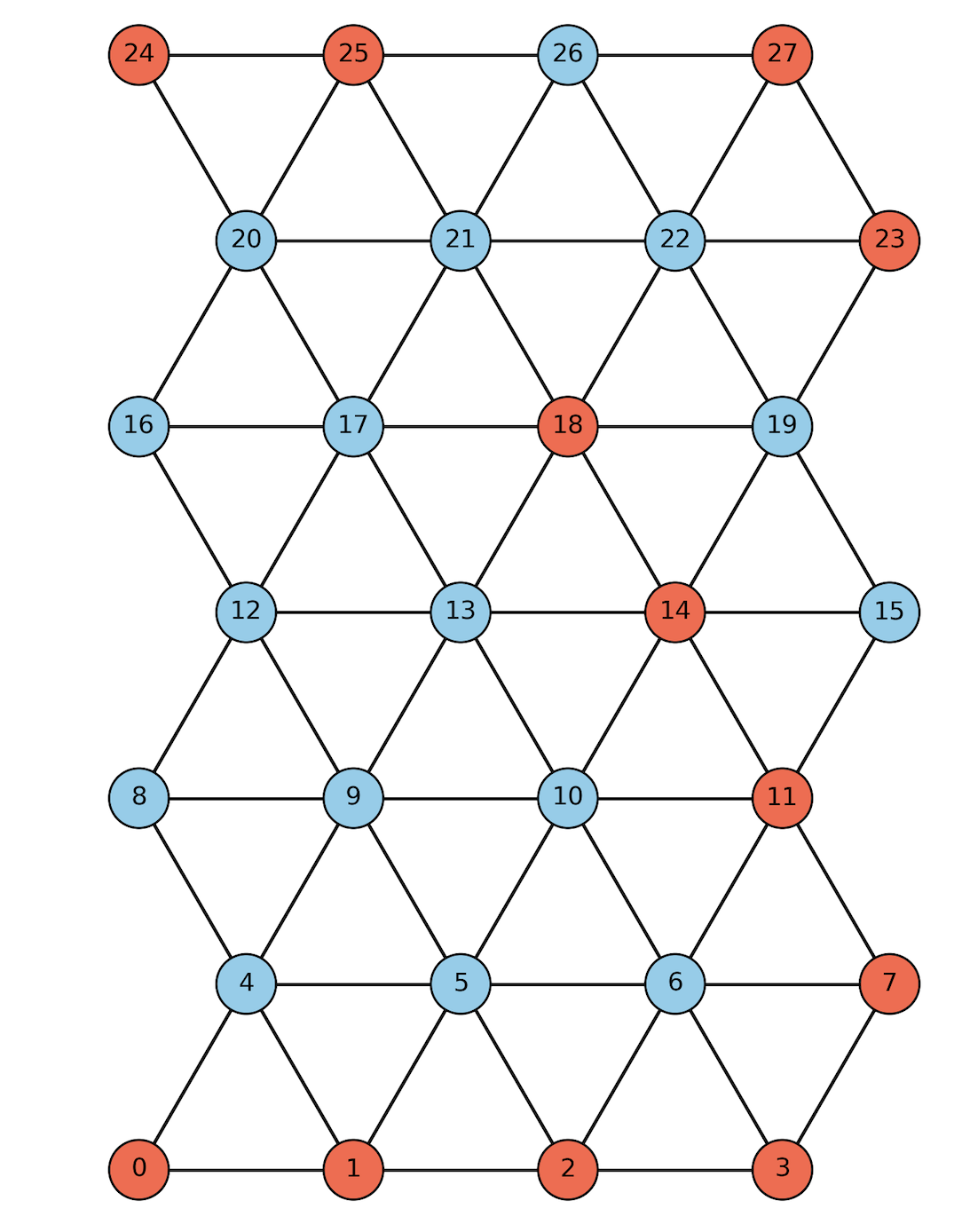}
            \caption{Triangular lattice example}
        \end{subfigure}
        \hfill
        \begin{subfigure}[t]{0.48\textwidth}
            \centering
            \includegraphics[height=4cm]{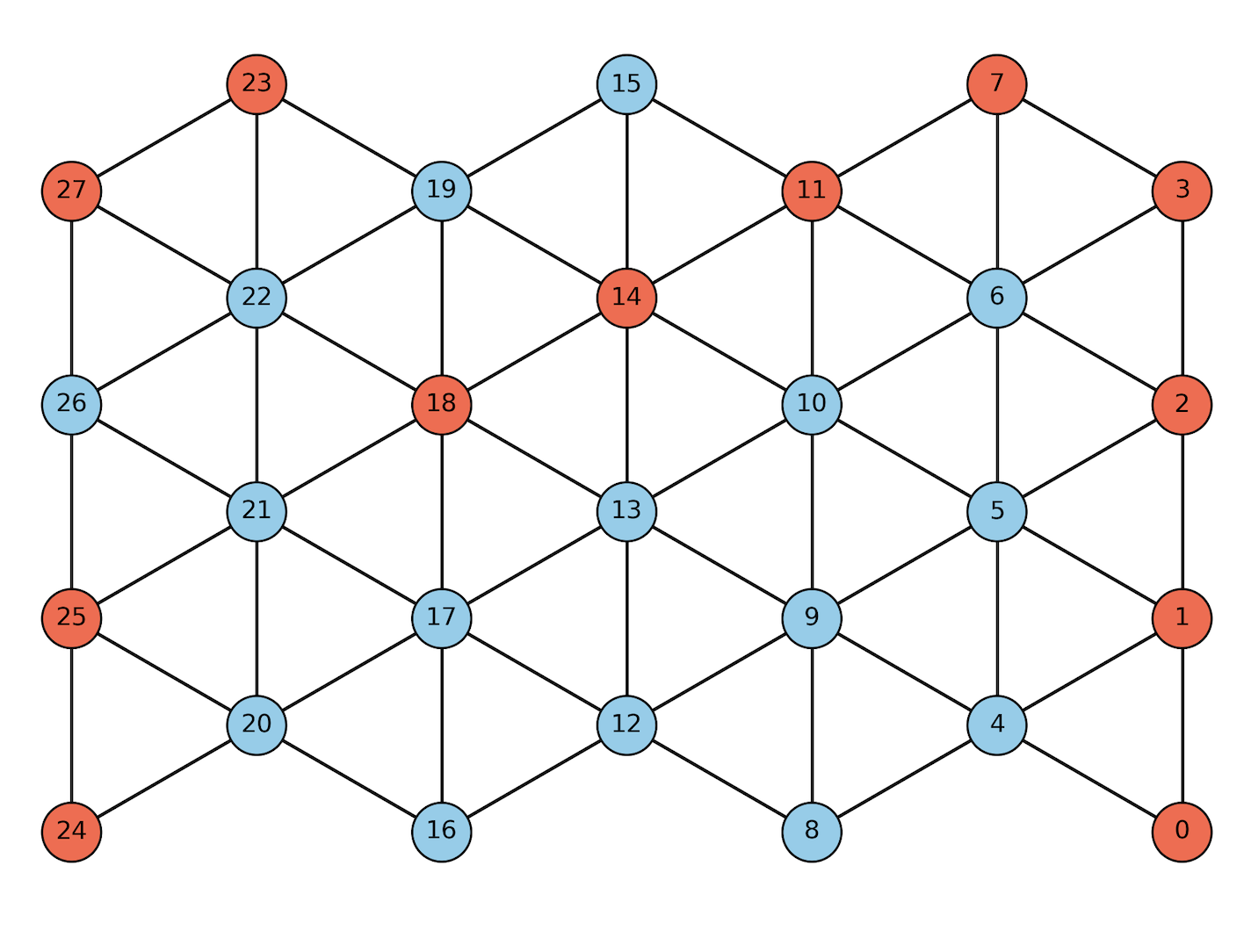}
            \caption{Rotated triangular lattice example}
        \end{subfigure}
        \caption{The red vertices indicate the vertex was randomly chosen to construct sampled root sets. This is an example of a pair of isomorphic graphs.}
        \label{fig:iso_pair}
    \end{figure}
    
    \begin{figure}
        \centering        
        \begin{subfigure}[t]{0.48\textwidth}
            \centering
            \includegraphics[height=4cm]{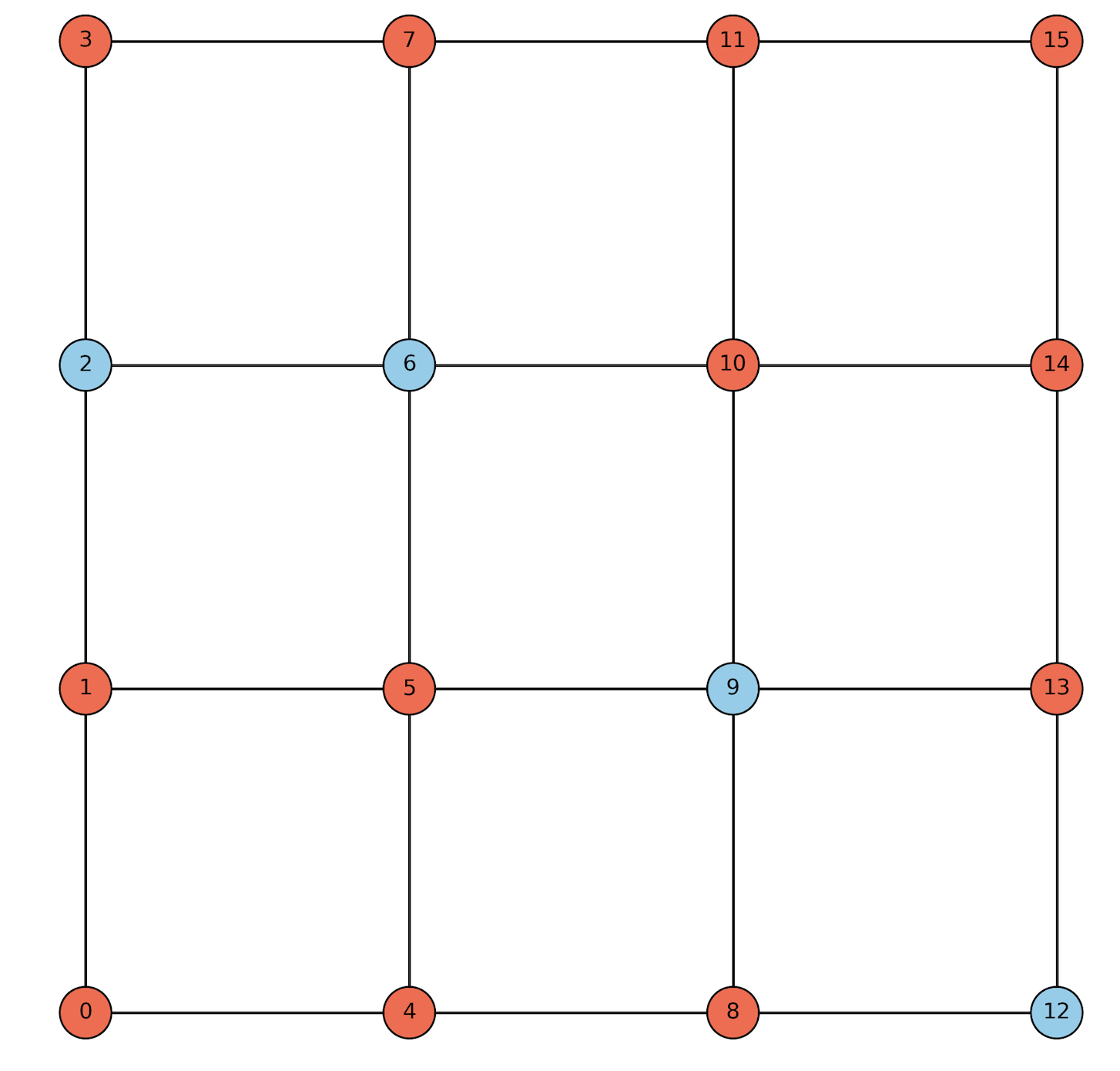}
            \caption{Square lattice example}
        \end{subfigure}
        \hfill
        \begin{subfigure}[t]{0.48\textwidth}
            \centering
            \includegraphics[height=4cm]{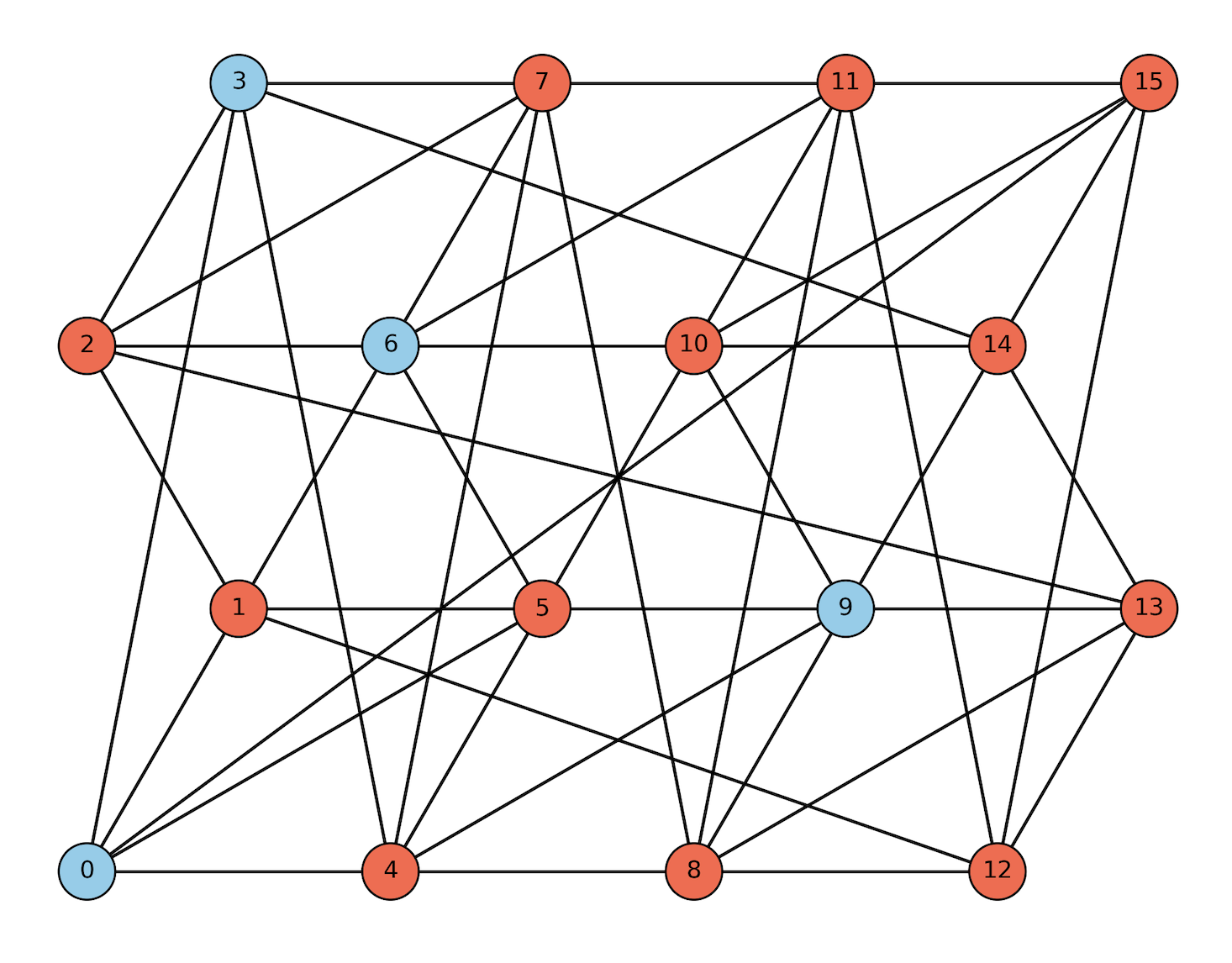}
            \caption{Triangular torus example}
        \end{subfigure}
        
        \caption{The red vertices indicate the vertex was randomly chosen to construct sampled root sets. This is an example of a pair of non-isomorphic graphs.}
        \label{fig:square_triang}
    \end{figure}

    \begin{figure}
        \centering        
        \begin{subfigure}[t]{0.48\textwidth}
            \centering
            \includegraphics[height=4cm]{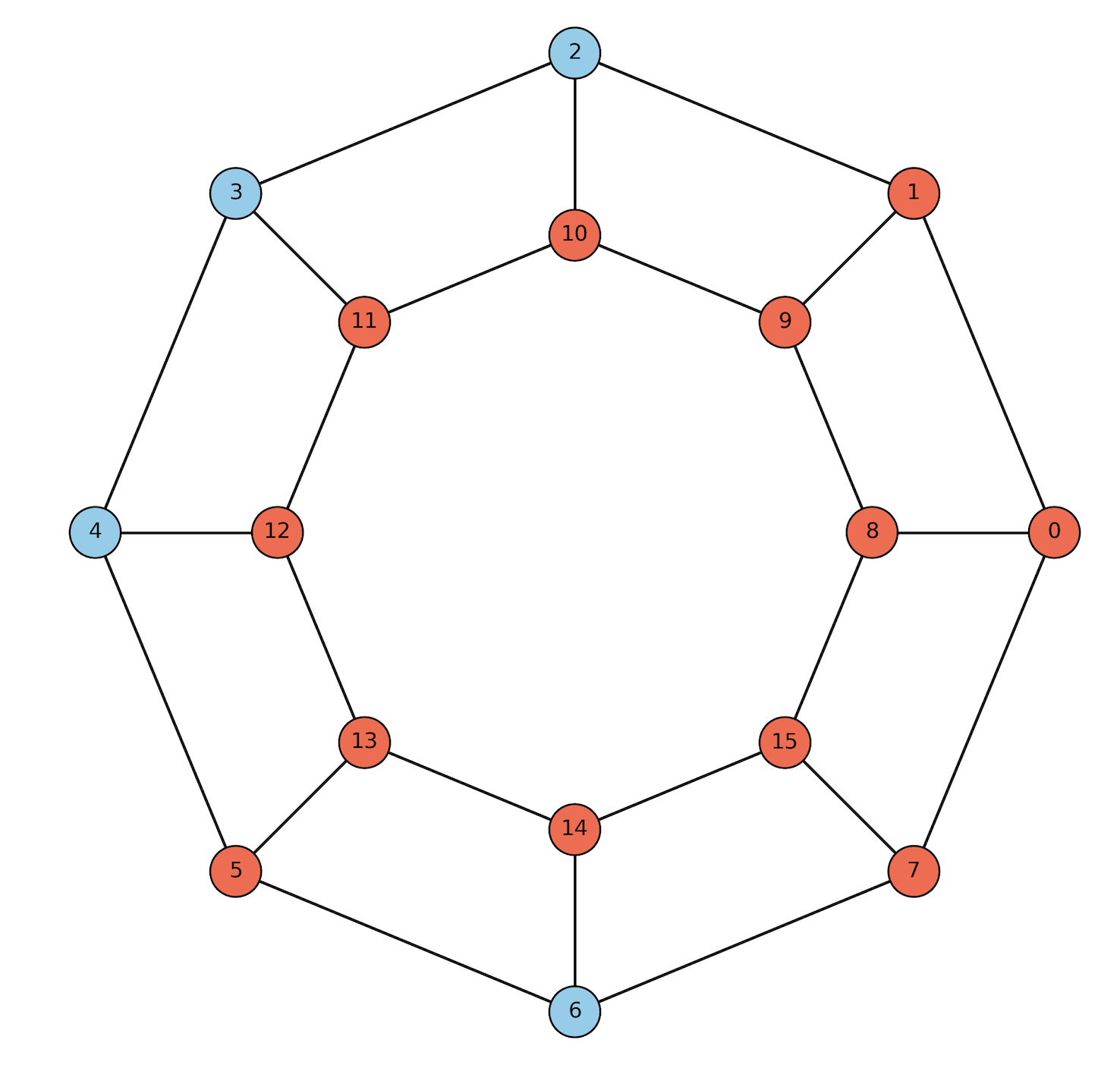}
            \caption{CL8 Circular ladder example}
        \end{subfigure}
        \hfill
        \begin{subfigure}[t]{0.48\textwidth}
            \centering
            \includegraphics[height=4cm]{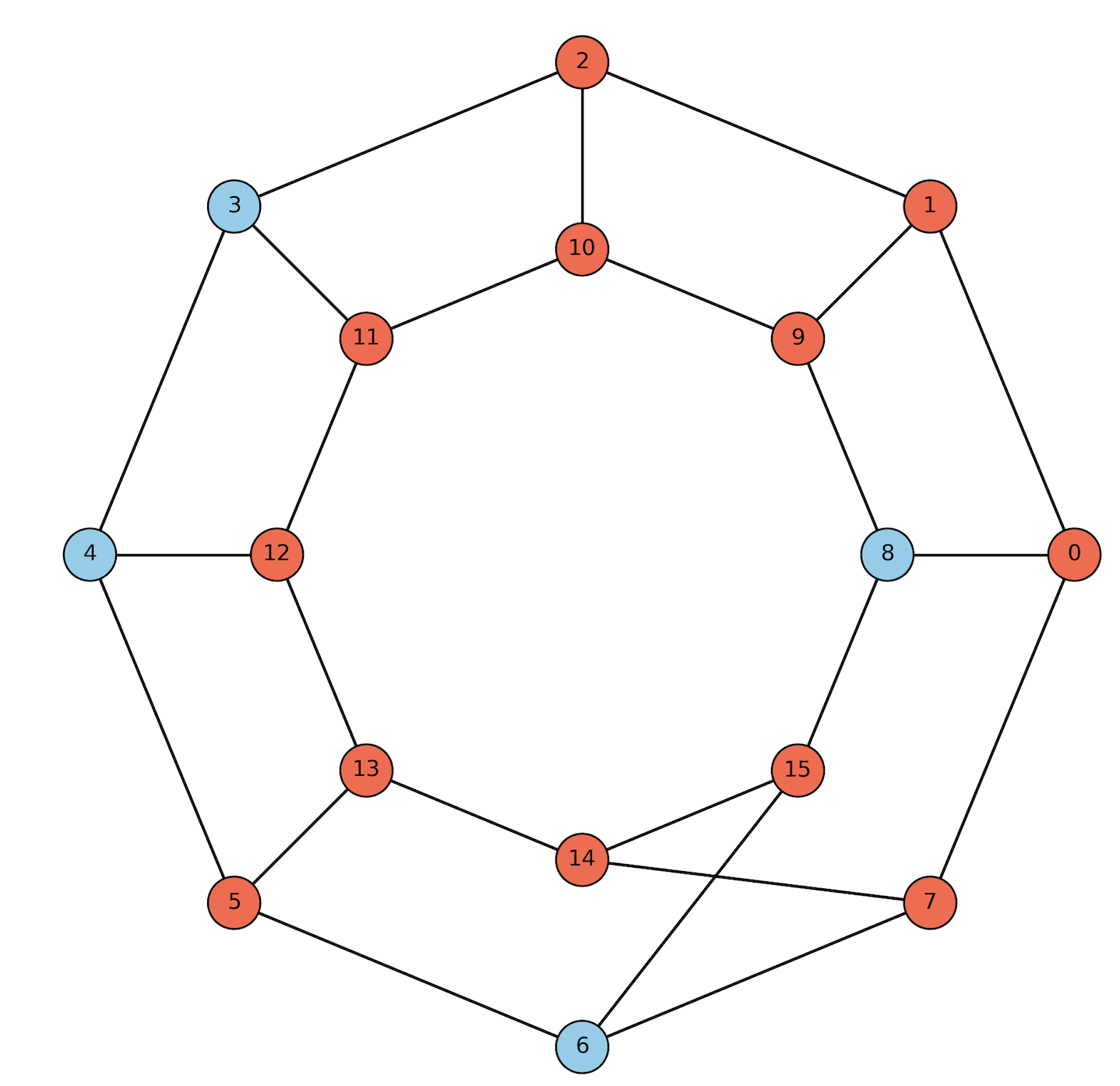}
            \caption{ML8 Mobius ladder example}
        \end{subfigure}   
        \caption{The red vertices indicate the vertex was randomly chosen to construct sampled root sets. This is an example of a pair of non-isomorphic graphs.}
        \label{fig:cl_ml}
    \end{figure}

\begin{table}
    \centering
    \renewcommand{\arraystretch}{1.2}
    \begin{tabular}{c c c c c c}
    \textbf{Graph} &
    \textbf{$h$} &
    \shortstack{\textbf{Sample}\\\textbf{Size}} &
    \shortstack{\textbf{Root Set}\\\textbf{Size}} &
    \shortstack{\textbf{Lie Algebra}\\\textbf{Dimension}} &
    \shortstack{\textbf{Commutant}\\\textbf{Dimension}} \\
    \hline
    Triangular lattice & 1 & 12 & 28 & 63 & 1 \\
    Rotated triangular lattice & 1 & 12 & 28 & 63 & 1 \\
    \hline
    Triangular lattice & 2 & 12 & 296 & 1023 & 1 \\
    Rotated triangular lattice & 2 & 12 & 296 & 1023 & 1 \\
    \hline
    Square lattice & 2 & 12 & 16 & 63 & 1 \\
    Triangular torus & 2 & 12 & 26 & 63 & 1 \\
    \hline
    CL8 circular ladder & 1 & 12 & 22 & 63 & 1 \\
    ML8 Mobius ladder & 1 & 12 & 76 & 255 & 1 \\
    \hline 
    Penrose Robinson patch & 2 & 50 & 496 & 1023 & 1 \\
    Penrose kite-dart patch & 2 & 50 & 1592 & 4095 & 1 \\
    \hline
    \end{tabular}
    \caption{Representative periodic and aperiodic graph realizations together with the computed local Pauli-support invariant data. The first and second pairs of graphs are visualized in Figure \ref{fig:iso_pair}. Figure \ref{fig:square_triang} is an example of a pair of non-isomorphic graphs, and their root data is presented here. Notice that for this example pair, the Lie information is the same however the root data is different. This demonstrates how the Lie algebra dimension is too coarse of an invariant.  Figure \ref{fig:cl_ml} is another example pair demonstrating how different graphs give rise to different root data immediately at hop size $h=1$. Finally, Figure \ref{fig:aperiodic} represents a pair of aperiodic graphs, which we discuss in the next section, that are different realizations of the same underlying Penrose quasicrystalline structure.} 
    \label{tab:comparison}
\end{table}    

It is worth mentioning, empirically, for almost every graph we tested, the Lie algebra generated by the root sets are maximal. This implies the root sets are sufficiently rich Pauli sets. Hence, the graph geometry is highly expressive, and this is the case across all our experiments, despite having very different combinatorial information. So while the Lie algebra is too coarse to distinguish graphs, the Pauli-support itself retains the distinguishing information. 

\section{Quasicrystalline (aperiodic) Structures}\label{section:aperiodic}

The periodic graphs setting benefits from the fact that lattice actions propagate local structures globally. This symmetry underlies the reconstruction results of the previous sections. In contrast, many naturally occurring structures do not admit any local translational symmetry as in \Cref{fig:aperiodic}. These aperiodic or quasicrystalline graphs lack a unit cell, and therefore fall outside the scope of lattice-based classification. This raises a fundamental question:
\begin{center}
    To what extent can local operator-theoretic information distinguish graphs in the absence of periodic structure?
\end{center}

Unlike periodic graphs, an aperiodic graph admits no global translational symmetry. However, they are still of interest because there may still be finitely many $h$-hop neighborhoods. i.e., local patterns may still recur throughout the graph in a non-periodic fashion. This suggests that invariants defined on local neighborhoods may still capture meaningful structural information. The formal definition is as follows:
\begin{definition}
    A graph $G=(V,E)$ embedded in $\mathbb{R}^d$ is called aperiodic if there does not exist a nonzero translation vector $\lambda\in\mathbb{R}^d$ such that
    \begin{equation}
        (v,w)\in E \iff (v+\lambda,w+\lambda)\in E
    \end{equation}
    for all $(v,w)\in E$.    
\end{definition}

While this removes the possibility of reducing graphs to their unit cells/finite motifs, many aperiodic graphs still exhibit finite local complexity.
\begin{definition}\label{def: flc}
    A graph $G=(V,E)$ is said to have finite local complexity (FLC) if for every hop size $h\geq 0$, there exists only finitely many rooted $h$-hop neighborhoods up to graph isomorphism. i.e. the set 
    \begin{equation}\label{eq: FLC}
        \mathcal{N}_h(G)\coloneqq \{B_h^G(v)\}_{v\in V}
    \end{equation}
    is finite. 
\end{definition}
Examples of aperiodic graphs with finite local complexity are notably the Penrose graph tilings~\cite{de1981algebraic}, Ammann--Beenker tilings~\cite{beenker1982algebraic}, and Robinson tilings~\cite{sadun2008topology}. Unlike periodic graphs, quasicrystalline graphs exhibit local patterns that repeat without any global translations.
More specifically, for a graph $G$ constructed by the Penrose tiling, using $h=1$, some vertices have degree 3, 4, 5, etc. Thus $\mathcal{N}_1(G) = \{B_1^G(v_i)\}_{i=1}^m<\infty$, for some finite $m$. 

    \begin{figure}[H]
        \centering        
        \begin{subfigure}[t]{0.48\textwidth}
            \centering
            \includegraphics[height=4cm]{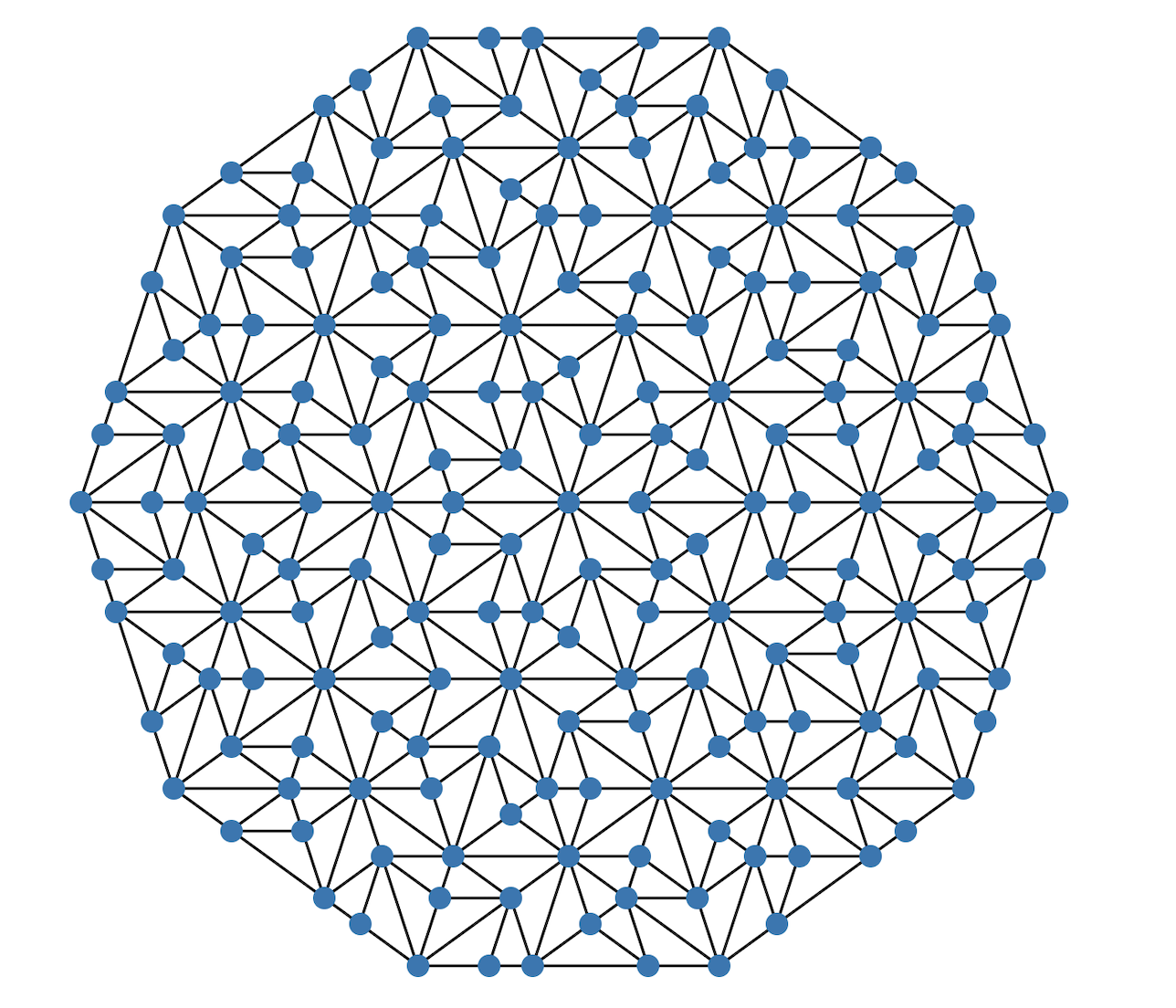}
            \caption{Penrose Robinson example}
        \end{subfigure}
        \hfill
        \begin{subfigure}[t]{0.48\textwidth}
            \centering
            \includegraphics[height=4cm]{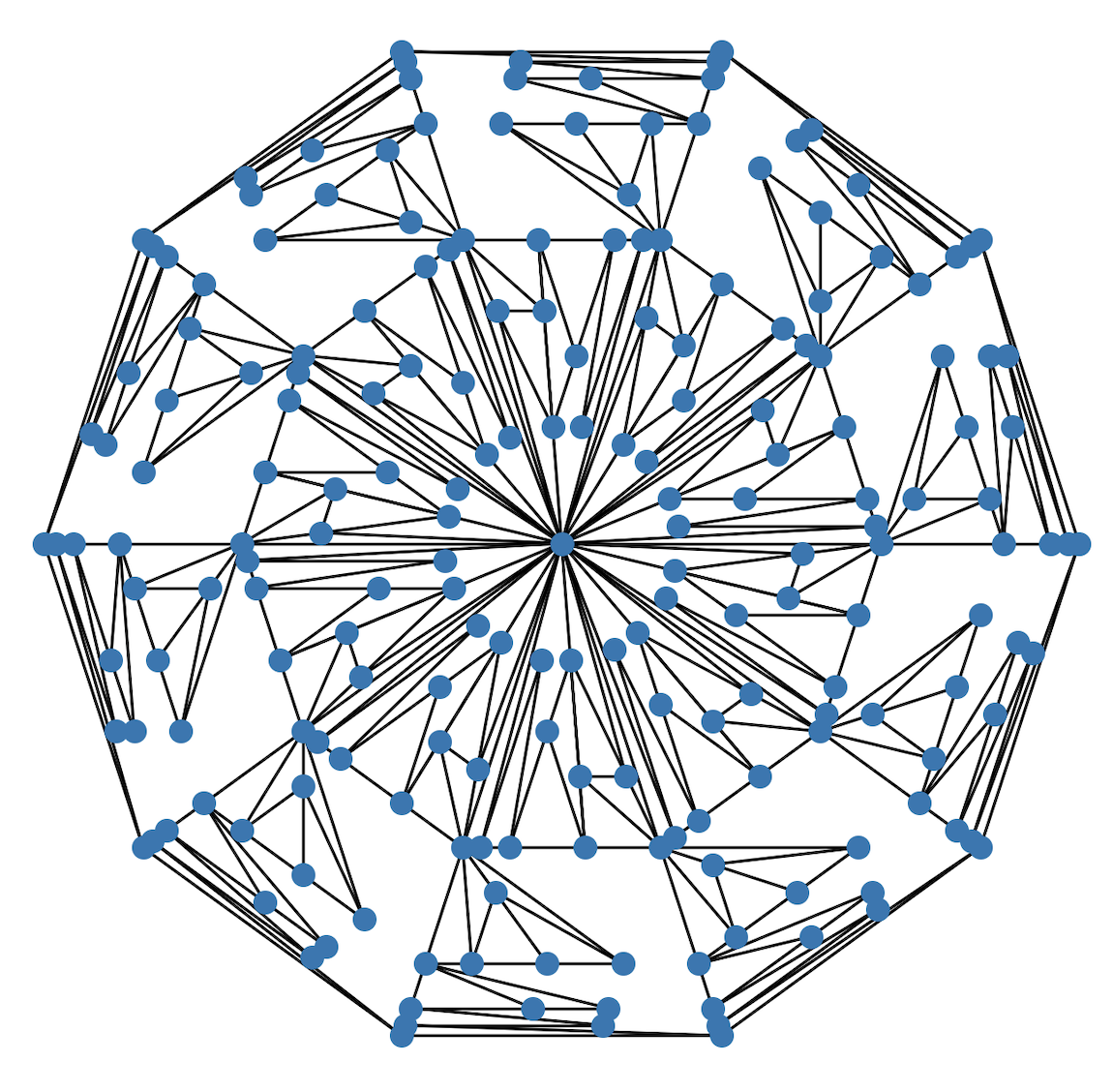}
            \caption{Penrose kite-dart example}
        \end{subfigure}        
        \caption{The coloring of the vertices was omitted to see the graphs more easily. This is an example of a pair of aperiodic graphs. } 
        \label{fig:aperiodic}
    \end{figure} 

Although there is no global translational symmetry in aperiodic graphs, local patterns recur throughout the graph in a non-periodic manner. This suggests that invariants defined on rooted local neighborhoods may still capture meaningful structural information in the absence of translational symmetry. To formalize this, we introduce rooted neighborhoods:
\begin{definition}
    Let $G=(V,E)$ be a graph and let $v\in V$. The rooted $h$-hop neighborhood of $v$ is the pair
    \begin{equation}
        (B_h^G(v),v),
    \end{equation}
    where $G[B_h^G(v)]$ denotes the induced subgraph on the vertex set
    \begin{equation}
        V(B_h(v)) = \{u\in V:\operatorname{dist}_G(u,v)\le h\}.
    \end{equation}
    Two rooted neighborhoods $(B_h(v),v)$ and $(B_h(w),w)$ are said to be rooted-isomorphic if there exists a graph isomorphism
    \begin{equation}
        \varphi:B_h(v)\to B_h(w)
    \end{equation}
    such that $\varphi(v)=w.$
\end{definition}
This definition is stronger than the standard graph isomorphism because it requires the map to send the central vertex of one graph to the central vertex of the other. The following theorem shows that the local Pauli-support framework remains
well-defined even in the absence of periodicity. This is precisely what generates a new description of quasicrystalline graphs. 

\begin{theorem}\label{theorem: flc}
    Let $G$ be a graph of finite local complexity. Then for every fixed radius $h$, only finitely many local root sets $R_h^G(v)$ for $v\in V$ can occur, i.e., the set 
    \begin{equation}
        \{R_h^G(v) : v\in V\}
    \end{equation} 
    is finite.
\end{theorem}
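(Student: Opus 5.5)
The plan is to show that $v\mapsto R_h^G(v)$ factors through the finite set $\mathcal{N}_h(G)$ of isomorphism classes of (rooted) $h$-hop neighborhoods. Finiteness of $\{R_h^G(v): v\in V\}$ then follows because it is the image of a finite set under a well-defined map.

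First, I will fix $h$ and use \Cref{def: flc}. It guarantees vertices $v_1,\dots,v_m$ such that for every $v\in V$ there is an index $i$ with $G[B_h^G(v)]\cong G[B_h^G(v_i)]$. If the FLC hypothesis is read in the rooted sense, I will take a rooted isomorphism sending $v$ to $v_i$. This is the more natural reading, since the canonical ordering depends on the chosen center.

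Second, I will show that isomorphic neighborhoods give identical root sets. The argument is the same one used in \Cref{prop: root translation invariant} and in the invariance theorem of \Cref{section: local invariant}. The vertex ordering of $B_h^G(v)$ is canonical, meaning it depends only on the root and the isomorphism type of the induced rooted graph. Hence the ordered adjacency matrices agree: $A_h^G(v)=A_h^G(v_i)$. Their sizes agree, so the qubit number $n=\lceil\log_2|B_h^G(v)|\rceil$ and the zero-padding agree as well. Therefore $H_h^G(v)=H_h^G(v_i)$ as $2^n\times 2^n$ matrices. The coefficients $c_h^G(v,\bm\sigma)=2^{-n}\Tr(P_{\bm\sigma}^*H_h^G(v))$ are uniquely determined by the operator, because the Pauli strings form an orthonormal basis. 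So $R_h^G(v)=R_h^G(v_i)$, and consequently $\{R_h^G(v):v\in V\}\subseteq\{R_h^G(v_1),\dots,R_h^G(v_m)\}$, which has at most $m$ elements.

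A remark will also cover the case where one prefers to avoid the canonical-ordering assumption. If the orderings differ by a permutation matrix $P$, each root set is still a subset of the finite set $\{I,X,Y,Z\}^n$. Moreover, $n$ takes only finitely many values, because FLC bounds $|B_h^G(v)|$ over the finitely many types. So the collection of possible root sets is contained in a finite power set, which gives finiteness directly. I will include this as a second, cruder argument that is robust to the ordering convention.

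The main obstacle is the well-definedness step: the root set must be a genuine invariant of the isomorphism class and not depend on the labeling. This rests entirely on the canonical-ordering convention. If that convention is doubted, the fallback bound via the finite power set of $\{I,X,Y,Z\}^{n_{\max}}$ still settles the claim, since FLC gives a uniform bound $n_{\max}$.
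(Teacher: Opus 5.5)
Your proposal is correct, and its main line is the paper's argument: the canonical ordering makes rooted-isomorphic neighborhoods yield identical padded operators $H_h^G(v)$, uniqueness of the Pauli expansion then makes $v\mapsto R_h^G(v)$ factor through the finite set $\mathcal{N}_h(G)$, and the image of a finite set is finite. Your fallback power-set argument is a valid addition that the paper does not give; it shows that finiteness needs only a uniform bound on $|B_h^G(v)|$ (so that $n\le n_{\max}$) and holds without the ordering convention.
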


\begin{proof}
    Fix $h\geq 0$. Since $G$ has FLC, \Cref{eq: FLC}    is finite. 

    By construction, the local adjacency matrices are formed using the fixed canonical vertex ordering induced earlier in the paper. Consequently, if 
    \begin{equation}
        (B_h(v),v)\cong(B_h(w),w),
    \end{equation}
    then, the canonical ordering produces identical ordered adjacency matrices. Hence their skew-hermitian operators  are identical: $H_h(v)=H_h(w)$. Since the Pauli expansion of an operator is unique, the corresponding root sets coincide: $R_h(v)=R_h(w)$. Thus, the assignment
    \begin{equation}
        F:\mathcal{N}_h(G)\longrightarrow \{R_h(v):v\in V(G)\}, \qquad [(B_h(v),v)] \longmapsto R_h(v)
    \end{equation}
    is well-defined. The set $\mathcal{N}_h^G$ is finite, so is its image
    \begin{equation}
        F(\mathcal{N}_h(G)) = \{R_h(v):v\in V(G)\},
    \end{equation}
    which completes the proof. 
\end{proof}

\Cref{theorem: flc} demonstrates that the operator-theoretic framework developed in this paper remains meaningful beyond the periodic setting. Although results of the previous sections no longer apply, FLC guarantees that the collection of local root set invariants remains finite and therefore computable. 

\subsection{Cut-and-Project Lifts}
Many quasicrystalline graphs admit a cut-and-project description~\cite{de1981algebraic, beenker1982algebraic, sadun2008topology, goldman1991quasicrystal}. The ambient lattice $\Lambda$ possesses a unit cell and translation action, however the projected aperiodic graph generally does not. In this framework, an aperiodic graph in a physical space is obtained from a higher-dimensional periodic lattice. Recall the cut-and project description of aperiodic graphs. Let
\begin{equation}
    \mathbb{R}^{m+k}=E_{\parallel}\oplus E_{\perp},
\end{equation}
where $E_{\parallel}\cong \mathbb{R}^m$ is the physical space,  $E_\perp\cong \mathbb{R}^k$ is the internal space, and $E_{\parallel}$ and $E_{\perp}$ are complementary linear subspaces of
$\mathbb{R}^{m+k}$ satisfying
\begin{equation}
    E_{\parallel}\cap E_{\perp}=\{0\},
    \qquad    E_{\parallel}+E_{\perp}=\mathbb{R}^{m+k}.
\end{equation}
 Let
\begin{equation}
     \Lambda \subset \mathbb{R}^{m+k}
\end{equation}
be a lattice, and let 
\begin{equation}
    \pi_{\parallel}:\mathbb{R}^{m+k}\to E_{\parallel},
    \qquad
    \pi_{\perp}:\mathbb{R}^{m+k}\to E_{\perp}
\end{equation}
denote the corresponding projections. Given a subset $W\subset E_{\perp}$, the associated cut-and-project set is
\begin{equation}
    \mathcal{P}(W) = \left\{
    \pi_{\parallel}(x):
    x\in \Lambda,\ \pi_{\perp}(x)\in W
    \right\}.
\end{equation}
when the projection $\pi_\parallel$ is injective on the lattice points, the set $\mathcal{P}(W)$ is a quasi-periodic point set in the physical space. Graphs associated to Penrose tilings, Ammann--Beeker tilings, and related quasicrystalline structures can be described in this way~\cite{fang2021curled}. 

Although the ambient lattice $\Lambda$ is periodic in $\mathbb{R}^{m+k}$, the projected structure $\mathcal{P}(W)$ need not be periodic in $E_\parallel$. Thus, the previous results depend essentially on the existence of a unit cell and a lattice action. After projection, this global translational symmetry no longer is visible in the physical graph. 
However, cut-and-project graphs often retain finite local complexity as defined in \Cref{def: flc}. This suggests a possible strategy for extending the present periodic framework to the aperiodic setting. Our numerical experiments suggest that local root-set invariants already distinguish several aperiodic graph realizations arising from Penrose-type constructions, as demonstrated in \Cref{tab:comparison}. Establishing a rigorous connection between these invariants and the hidden periodic parent lattice remains an interesting direction for
future work.

\section{Conclusions}
In this paper, we introduced a local operator-theoretic invariant for periodic graphs by encoding rooted graph neighborhoods as adjacency Hamiltonians and recording their Pauli-support sets. We proved that these root sets are preserved under lattice-compatible graph equivalence. Conversely, under a root separation condition, we also recover the corresponding local rooted graph structure. We further show that the associated Lie algebras, Cartan decompositions, and KAK factorizations are preserved under graph relabeling. Finally, we partially extend this constructed invariant to aperiodic and quasicrystalline graphs. 

For mathematicians, this framework provides a new bridge between graph invariants, periodic geometry, and operator algebras. For physicist, this work gives an algebraic language for comparing graph-derived Hamiltonians and their dynamical structures, which can be simulated on quantum computers. For material scientists, it offers a local method for distinguishing and analyzing crystalline and quasicrystalline patterns. Future work may extend the framework to broader classes of graph structures, weighted or directed graph classes, and more physically detailed Hamiltonians.

\backmatter

\section*{Declarations}

\begin{itemize}
\item \textbf{Funding} 
S.C. and R.M. were supported by NIST through the CIPP program under Award No. 60NANB24D218 and NSF through the NSF TIP program under Award No. 2534232.
\item \textbf{Competing interests statement:} The authors have no competing interests to declare that are relevant to the content of this article.
\item \textbf{Ethics approval and consent to participate:} not applicable
\item \textbf{Consent for publication:} not applicable
\item \textbf{Data availability:} not applicable 
\item \textbf{Materials availability:} not applicable
\item \textbf{Code availability: } Upon request 
\end{itemize}

\bibliography{bib}
\end{document}